\tikzstyle{materia}=[draw, fill=blue!20, text width=6.0em, text centered,
\tikzstyle{etape} = [materia, text width=4.5em, minimum width=6.5em,
\tikzstyle{betape} = [materia, text width=6em, minimum width=10em,
\tikzstyle{plaintext} = [draw=none, fill=none, text width=6em, minimum width=10em,
\tikzstyle{line} = [draw, thick, -latex']
\newdimen\nodeDist
\newdimen\snodeDist
\tikzset{
    events/.style={ellipse, draw, align=center},
}
\newcommand\BibTeX{{\rmfamily B\kern-.05em \textsc{i\kern-.025em b}\kern-.08em
T\kern-.1667em\lower.7ex\hbox{E}\kern-.125emX}}
\begin{document}


\title{On Prediction and Tolerance Intervals\\ for Dynamic Treatment Regimes}

\author{Daniel J. Lizotte \\
Departments of Computer Science and Epidemiology \& Biostatistics,\\ The University of Western Ontario, London, Ontario, Canada
\and Arezoo Tahmasebi \\
Department of Epidemiology \& Biostatistics,\\ The University of Western Ontario, London, Ontario, Canada
}

\maketitle

\begin{abstract}
We develop and evaluate tolerance interval methods for dynamic treatment regimes (DTRs) that can provide more detailed prognostic information to patients who will follow an estimated optimal regime. Although the problem of constructing confidence intervals for DTRs has been extensively studied, prediction and tolerance intervals have received little attention. We begin by reviewing in detail different interval estimation and prediction methods and then adapting them to the DTR setting. We illustrate some of the challenges associated with tolerance interval estimation stemming from the fact that we do not typically have data that were generated from the estimated optimal regime. We give an extensive empirical evaluation of the methods and discussed several practical aspects of method choice, and we present an example application using data from a clinical trial. Finally, we discuss future directions within this important emerging area of DTR research.
\end{abstract}

\section{Introduction}

Dynamic Treatment Regimes (DTRs), also known as adaptive treatment strategies or treatment policies, are a key tool for providing data-driven sequential decision-making support. A DTR is a sequence of decision functions that take up-to-date patient information as input and produce a recommended treatment. Thus, a DTR is a mathematical representation of the sequential decision-making process. Using this representation, we can use previously collected decision-making data to estimate an ``optimal'' DTR, where optimality is most often defined in terms of expected outcome. That is, a DTR is optimal if it produces the best outcome, on average, over a patient population. We will use this definition of optimality throughout our work.

Each decision in an optimal DTR is made in the service of achieving maximal expected outcome. However, the outcome of any particular individual under an optimal regime may vary widely from this expectation. Indeed, DTRs have been applied in many very challenging areas of medicine, including psychiatry, cancer, and HIV, where patient outcomes are known to be highly variable, or, equivalently from our perspective, difficult to predict.

It is with this variability in mind that we consider different methods for assessing the variability in {\em individual outcomes} under a given DTR. Our objective is to quantify for the decision-maker not our certainty about the expectation of outcomes, but rather our uncertainty about what the observed outcome might be {\em for a particular patient.}

We begin by formally defining DTRs, and we review point and interval estimation techniques for relevant parameters of the optimal DTR. We then review definitions and existing methods for confidence intervals, prediction intervals, and tolerance intervals. Following this background, we formally describe our problem of interest in the context of using DTRs to provide decision support. 

We will see that the main technical challenge associated with constructing tolerance intervals for DTRs stems from not having a sample drawn from the correct distribution. Thus, our methods will use re-weighting and re-sampling to allow us to apply existing tolerance interval methods in this setting. To help illustrate the technical challenge, we first describe a na\"{i}ve strategy for constructing tolerance intervals whose performance is poor, and we then present two novel strategies for constructing valid tolerance intervals for the response under a given dynamic treatment regime. We present an empirical evaluation of the methods, and we conclude by discussing their implications and directions for future work.

\section{Background}\label{ss:background}

In the following, we review basic concepts pertaining to DTRs, the estimation of optimal regimes, and concepts and issues surrounding interval estimation and prediction.

\subsection{Dynamic Treatment Regimes}
\newcommand{\E}{\mathrm{E}}
\newcommand{\X}{\mathcal{X}}
\newcommand{\Y}{\mathcal{Y}}
\renewcommand{\P}{\mathrm{Pr}}
\newcommand{\T}{\mathsf{T}}

DTRs are a mathematical formalism meant to capture the decision-making cycle of information gathering, followed by treatment choice, followed by outcome evaluation. They have been defined at different levels of generality by many authors  \citep{Schulte2014,laber14dynamic,laber14setvalued,lizotte12linear,Nahum-Shani2012,Nahum-Shani2012a,lizotte10multiple,shortreed11mlj}. 
Here, we focus on regimes with two decision points; thus for this work we consider a DTR to be a sequence of two functions $(\pi_1,\pi_2)$ which map up-to-date patient information at the first and second decision points, respectively, to distributions over the space of available treatments at each decision point. We represent the information (covariates) about a given patient at point $t$ by $s_t$, which we view as a realization of a random variable $S_t$. Similarly, we denote the chosen treatment (action) by $a_t$, which is a realization of $A_t$. For a patient who follows a DTR $(\pi_1,\pi_2)$, we will have $A_1 \sim \pi_1(s_1)$ and $A_2 \sim \pi_2(s_1,a_1,s_2)$. We let $y$ be the observed outcome or {\em reward} attained by a patient after following a regime, and we follow the convention that larger values of $y$ are preferable. For a patient following a given regime, we observe $(s_1, a_1, s_2, a_2, y)$, the {\em trajectory} for that patient. 

Trajectory data may come from various observational and experimental sources, for example from Sequential Multiple Assignment Randomized Trials (SMARTs) \citep{Nahum-Shani2012,Nahum-Shani2012a,Collins2014}. A SMART is an experimental design under which patients follow a DTR that applies randomly assigned treatments. We will call such a DTR an {\em exploration} DTR or {\em exploration policy.} The goal of running a SMART is analogous to that of running a pragmatic randomized controlled trial---to evaluate the comparative effectiveness of different treatment options in an unbiased way. This comparative effectiveness information can then be used to estimate an {\em optimal} DTR. An optimal DTR is a pair of decision functions $(\pi_1,\pi_2)$ that maximize $\E[Y|S_1,A_1,S_2,A_2;\pi_1,\pi_2]$ where $A_t \sim \pi_t(S_t)$. Thus, an optimal DTR produces maximal expected outcome when applied to a population of patients. In this work, we focus on the setting where the {\em exploration} DTR is stochastic, but the candidate optimal DTRs under consideration are deterministic.

\subsection{$Q$-learning}
\newcommand{\Qhat}{\hat{Q}}
\newcommand{\pihat}{\hat{\pi}}
\newcommand{\is}{\mathord{=}}
Several methods are available for estimating an optimal DTR from data collected under an exploration DTR. Here, we review one such method called {\em $Q$-learning} \citep{Schulte2014,Huang2015}. $Q$-learning works by estimating $Q$ functions ($Q$ for ``quality'') that predict expected outcome given current covariates and treatment choice. In our 2-decision point setting, we have
\begin{equation*}
Q_2(s_1,a_1,s_2,a_2) = \E[Y|S_1\is s_1, A_1\is a_1, S_2\is s_2,A_2\is a_2].
\end{equation*}
Note that unlike the expectation in the previous section which averages over patients, $Q_2$ gives the expectation of $Y$ {\em conditioned on particular patient observations and treatment choices.}
The definition of $Q_2$ implies an optimal decision function $\pi^*_2(s_1,a_1,s_2) = \arg\max_{a'_2} Q_2(s_1,a_1,s_2,a'_2)$. $Q_2$ can be estimated using any regression method. Having obtained an estimate $\Qhat_2$ of $Q_2$, our estimate of the optimal second decision function is $\pihat^*_2(s_1,a_1,s_2) = \arg\max_{a'_2} \Qhat_2(s_1,a_1,s_2,a'_2)$. 

The optimal $Q$-function for the first decision point produces the conditional mean of $Y$ given $S_1$ and $A_1$ and {\em given that the optimal decision function $\pi^*_2$ is used at the second decision point.} In $Q$-learning, we estimate $Q_1$ by
\begin{equation*}
\Qhat_1(s_1,a_1) \approx \E[\max_{a'_2} \Qhat_2(s_1,a_1,S_2,a'_2)
|S_1\is s_1, A_1\is a_1]
\end{equation*}
where the expectation is over $S_2$ conditioned on $S_1$ and $A_1$. The quantity $\max_{a'_2} \Qhat_2(s_1,a_1,S_2,a'_2)$ is sometimes called the {\em pseudooutcome}, and is denoted $\tilde y$. In order to estimate $Q_1$, we compute the pseudooutcome for each trajectory in our dataset, and then regress them on $S_1$ and $A_1$ to estimate $Q_1$. Again, any regression method can be used to estimate $Q_1$, in principle. Our corresponding estimate of the optimal first decision function is then  $\pihat^*_1(s_1) = \arg\max_{a'_1} \Qhat_1(s_1,a'_1)$, and our estimate of the optimal DTR is $(\pihat^*_1,\pihat^*_2)$. Note that this DTR is deterministic.

We focus on $Q$-learning in this work, but several other methods are available for estimating optimal DTRs, including $A$-learning \citep{Blatt2004,Schulte2014}, the closely-related $g$-estimation \citep{Moodie2009,Orellana2010,Barrett2014}, and direct policy search \citep{Zhao2014,Zhao2015}.

\subsection{Interval Estimation}

For consistency, in the following we use $y$s to represent observed outcomes, and $x$s to represent covariates, even in non-regression settings.

\subsubsection{Confidence Intervals}
\newcommand{\CIl}{{\ell_c}}
\newcommand{\CIu}{{u_c}}
\newcommand{\CI}{{(\CIl,\CIu)}}

A {\em confidence interval} $\CI$ with level $1 - \alpha$ for a parameter $\theta$ is a functional of a dataset $\Y = \{y_1, ..., y_n\}$ of realizations of a random variable $Y$, with the property that
\begin{equation}\label{eq:CIprob}
\Pr[\theta \in \CI] \ge 1 - \alpha.
\end{equation}
The probability statement (\ref{eq:CIprob}) is over datasets containing i.i.d.\ samples of $Y$. The goal of a confidence interval is to provide confidence information about the estimated location of an underlying distributional parameter. Though not our main focus, confidence intervals are by far the most well-known class of interval estimates, and they are closely related to the prediction and tolerance intervals we will develop and investigate.

\subsubsection{Prediction Intervals}
\newcommand{\PIl}{{\ell_p}}
\newcommand{\PIu}{{u_p}}
\newcommand{\PI}{{(\PIl,\PIu)}}
\newcommand{\new}{\mathrm{new}}

A {\em prediction interval} $\PI$ with level $1 - \alpha$ is a functional of a dataset $\Y = \{y_1, ..., y_n\}$ of realizations of a random variable $Y$, with the property that
\begin{equation}\label{eq:PIprob}
\Pr[Y_{\new} \in \PI] \ge 1 - \alpha.
\end{equation}
Here, $Y_{\new}$ represents a single future observation that was not contained in the original data $\Y$. The goal of a prediction interval is to provide confidence information about where this new observation might fall. However, we note, as others have \citep{Vardeman2012}, that there is often confusion surrounding the probability statement (\ref{eq:PIprob}). In particular, the statement is over the joint distribution of $Y_1,...,Y_{n},Y_\new$. A prediction interval formed from a dataset traps {\em one additional observation} with probability $1 - \alpha$. It offers no guarantees about trapping more than one additional observation, and indeed no guarantees regarding our confidence in the {\em content} of an interval, that is, of the quantity $F_Y(\PIu) - F_Y(\PIl)$ where $F_Y$ is the cumulative distribution function of $Y$. (For example, a prediction interval that has content $1.0$ half the time and content $0.9$ half the time has property (\ref{eq:PIprob}) for $\alpha=0.05$, as does an interval that always has content $0.95$.)

\newcommand{\N}{\mathcal{N}}

The well-known normal-theory prediction interval for $Y$ 
\citep{NeterJohnandWassermanWilliamandKutner1989} is given by
\begin{equation}
\PI_{\N} = \bar{y} \pm t_{\alpha/2;n-1}\hat{\sigma}_Y \sqrt{1 + \frac{1}{n}} \label{eq:normPI}
\end{equation}
where $\bar{y}$ is the sample mean, $\hat{\sigma}_Y$ the sample standard deviation, and $t_{\alpha/2;n-1}$ is the $\alpha/2$ quantile of a $t$-distribution with $n - 1$ degrees of freedom. Note that the   validity of (\ref{eq:normPI}) is predicated on normality of $Y$, regardless of sample size.

The corresponding prediction interval for $Y|X\is x$ in the linear regression setting on $p$ parameters is
\begin{equation}\label{eq:normRegPI}
\PI_\N = \hat{y} \pm t_{\alpha/2;n-p} \hat{\sigma}_{Y|X\is x} \sqrt{1 + x^\T (\mathrm{X}^\T\mathrm{X})^{-1} x}
\end{equation}
where $x$ represents the location of a new sample, $\hat{y}$ is the prediction of $\E[Y|X\is x]$, $\hat{\sigma}_{Y|X\is x}$ is the sample standard deviation of the residuals, $\mathrm{X}$ is the design matrix for the regression, and $t_{\alpha/2;n-p}$ is the $\alpha/2$ quantile of a $t$-distribution with $n - p$ degrees of freedom. Equation (\ref{eq:normRegPI}) is predicated on the normality of $Y|X=x$ and on homoscedasticity of the residuals.
\subsubsection{Tolerance Intervals}
\newcommand{\TIl}{{\ell_t}}
\newcommand{\TIu}{{u_t}}
\newcommand{\TI}{{(\TIl,\TIu)}}

A {\em tolerance interval} $\TI$ with level $1 - \alpha$ and content $\gamma$ is also a functional of a dataset $\Y = \{y_1, ..., y_n\}$. It has the property that
\begin{equation}\label{eq:TIprob}
\Pr[F_Y(\TIu) - F_Y(\TIl) \ge \gamma] \ge 1 - \alpha.
\end{equation}
where $F_Y$ is the cumulative distribution function of $Y$.
Thus, a tolerance interval formed from a dataset traps at least $\gamma$ of the probability content of $Y$ with probability $1 - \alpha$, where the $1 - \alpha$ probability is over datasets.

One well-known normal theory approximate tolerance interval for $Y$ with confidence $1 - \alpha$ and content $\gamma$ is given by \cite{KrishnamoorthyKalimuthuandMathew2009} as 
\begin{equation}\label{eq:normTI}
\TI_\N = \bar{y} \mp \hat{\sigma}_Y \sqrt{\frac{(n - 1) \chi^2_{\gamma;1,1/n}}{\chi^2_{\alpha;n - 1}}}
\end{equation}
where $\bar{y}$ is the sample mean, $\chi^2_{\gamma;1,1/n}$ is the $\gamma$ quantile of a non-central $\chi^2$ distribution with 1 degree of freedom and noncentrality parameter $1/n$, and $\chi^2_{\alpha;n-1}$ is the $\alpha$ quantile of a $\chi^2$ with $n-1$ degrees of freedom. 

The corresponding tolerance interval for $Y|X\is x$ in the linear regression setting on $p$ parameters is \citep{Young2013}
\begin{equation}\label{eq:normRegTI}
\TI_\N = \hat{y} \mp \hat{\sigma}_{Y|X\is x} \sqrt{\frac{(n - p) \chi^2_{\gamma;1,1/n^*}}{\chi^2_{\alpha;n - p}}}
\end{equation}
where $\hat{y}$ is the prediction of $\E[Y|X\is x]$, $\hat{\sigma}_{Y|X\is x}$ is the sample standard deviation of the residuals, 
$n^* = \hat{\sigma}^2_{Y|X\is x} /\hat{\sigma}^2_{\hat{y}}$ is Wallis' ``effective number of observations" (\citeyear{Wallis1951}), and $\hat{\sigma}_{\hat{y}}$ is the standard error of $\hat{y}|X\is x$. Again, validity of (\ref{eq:normTI}) and (\ref{eq:normRegTI}) is predicated on the normality of $Y$ and $Y|X\is x$, respectively; (\ref{eq:normRegTI}) is also predicated on homoscedasticity.

\cite{Wilks1941} proposed a non-parametric tolerance interval that assumes only continuity of $F_Y$. The interval is given by the sample values corresponding to the minimum and maximum ranks $r$ for which
\begin{equation}
(1 - F_{\mathrm{Beta}}(\gamma; n - 2r + 1, 2r)) > 1 - \alpha \label{eq:wilksRanks}
\end{equation}
where $F_{\mathrm{Beta}}$ is the beta cumulative distribution function. Thus, the interval is constructed simply by truncating the sample to the ranks satisfying (\ref{eq:wilksRanks}), and then taking the minimum and maximum of the truncated sample to be the lower and upper limits of the tolerance interval, respectively.

\section{DTRs for Decision Support}

DTRs are an ideal formalism for providing data-driven {\em decision support}. The most basic approach to providing decision support would be to estimate an optimal DTR from SMART data, and then provide the estimated DTR $(\pihat^*_1, \pihat^*_2)$ to a decision maker, perhaps as a computer-based tool that produces the estimated optimal treatment by using current patient information as input to the previously estimated DTR.

Early in the development of DTRs it was recognized that this approach is problematic because it provides no confidence information about our recommendations. Just as we would not recommend one treatment over another if no statistically significant difference were obtained from a standard randomized controlled trial (RCT), neither should we recommend a single treatment in a DTR if in fact the alternatives are not known to be inferior with high confidence. This led to the development of confidence interval methods for the difference in mean expected outcome under different treatment choices within a regime \citep{Chakraborty2010,Chakraborty2013,Chakraborty2013a,laber14dynamic,Chakraborty2014}.

Such intervals can give us confidence that if we do recommend a single treatment, that treatment will provide a better outcome, in expectation over patients. However, they do not provide any information about what the range of possible outcomes might actually be for an {\em individual} patient. In particular, large SMARTs with 100s to 1000s of patients may discover statistically significant differences in mean outcome even when the effect sizes are small to moderate and variance in outcomes is still substantial. If this is the case, it may be better to avoid recommending a single treatment, or at least to provide more nuanced information about what the patient's experience is likely to be under the different treatment options.

In this work, we consider tolerance intervals as one method for providing this information. For a patient with $S_1 = s_1$ at the first decision point, rather than recommending treatment $\pihat^*_1(s_1)$ (even if it is statistically significantly better than the alternative in terms of mean outcome) we would present tolerance intervals for the outcome $Y$ under each possible action, and allow the decision-maker (or the patient-clinician dyad, in the context of patient-centred care \citep{Barry2012}) to decide on treatment based on the range of probable outcomes indicated by the intervals. For each interval, we condition on the observed $s_1$, the hypothetical $a_1$, and the estimated optimal regime $\pihat^*_2$ for the second stage.

Thus, we will construct tolerance intervals for $Y|S_1=s_1,A_1=a_1;\pi_2 = \pihat^*_2$, marginal over $S_2$ (whose distribution is governed by $S_1$ and $A_1$) and $A_2$ (whose distribution is governed by $A_1$, $S_1$, $S_2$ and $\pi_2$.) To do so, we will adapt several standard methods because typically {\em we do not have observations drawn from this distribution.} This is because, as we noted above, data from SMART studies and similar sources are generated according to an exploration DTR $(\pi^0_1,\pi^0_2)$, rather than according to an estimated optimal DTR $(\pihat^*_1,\pihat^*_2)$. 

\subsection{Aside: Non-regularity}

It is well-known that many kinds of inference on the parameters of an estimated optimal dynamic treatment regime, including confidence intervals, are plagued by issues of {\em non-regularity}  \citep{laber14dynamic}. Briefly, non-regularity is a result of the sampling distributions of corresponding estimators changing abruptly as a function of the true underlying parameters. It can lead to bias in estimates and anti-conservatism in inference. In dynamic treatment regimes, non-regularity occurs and inference is problematic when two or more treatments produce (nearly) the same mean optimal outcome. In this work, we will not specifically develop methods that are robust to non-regularity. This is because even in the absence of non-regularity, i.e.\ when optimal $Q$ values are well-separated from sub-optimal ones, there is significant variability in the performance of ``standard'' tolerance interval methods that is worthy of exploration and analysis. We will return to this point in the Discussion.

\section{Methods}\label{ss:methods}

We now detail our strategies for constructing tolerance intervals for  $Y|S_1\is s_1,A_1\is a_1;\pi_2\is \pihat^*_2$. As we mentioned above, the fundamental challenge of constructing intervals for this quantity is that in general we do not have samples drawn from this distribution---otherwise, we could use off-the-shelf tolerance interval methods. Note that we {\em can} use off-the-shelf methods for tolerance intervals for $Y|S_2\is s_2, A_2\is a_2$, because there is no need to account for future decision-making in that case; thus our work focuses on the first decision point. We begin by presenting a na\"{i}ve approach to constructing tolerance intervals that helps illustrate the main technical challenge to be addressed, and then we present our two proposed strategies: {\em inverse probability weighting}, and {\em residual borrowing}.

\subsection{Na\"{i}ve $Q$-Learning Tolerance Intervals}

Standard $Q$-learning involves estimating $Q_1(s_1,a_1)$, which predicts the expected $Y$ under the optimal regime. However, it does so using the pseudooutcome $\tilde{Y} = \max_{a'_2} \Qhat_2(s_1,a_1,s_2,a'_2)$ as the regression target, rather than the observed $Y$. Since the pseudooutcome targets are themselves predicted conditional means of $Y$, they carry no variance information about $Y|S_2,S_1,A_1$ under the estimated optimal policy, even among trajectories that (by chance) followed the estimated optimal policy. To see this, suppose that we had several trajectories, all of which had the same $s_1,a_1,s_2,a_2$, and all of whom happened to follow the estimated optimal policy. Even though their {\em observed} outcomes $y$ might have all been different, simply due to unexplained (but still important) variation in $Y$, they would all be assigned the same pseudooutcome value, and the sample variance of the pseudooutcomes in this group is zero.

This observation highlights the key aspect of $Q$-learning and related methods that precludes direct estimation of variability in $Y$. Dynamic programming methods for estimating conditional means of sequential outcomes can ``throw away'' residual variance without negative repercussions when backing up values, essentially because of the law of total expectation. The benefit of this approach is a reduction in the variance of $Q$ estimates by allowing the use of the entire dataset of trajectories for estimating $Q$-functions for earlier decision points. The drawback is that such methods cannot directly estimate other distributional properties of $Y$, including variance and higher-order moments, quantiles, and so on.

If most of the variability in $Y$ were explained by $S_2$ and $A_2$---that is, if the variance of $Y|S_2,A_2$ were nearly zero---we might be able to construct approximate tolerance intervals for $Y$ by constructing parametric tolerance intervals for the pseudooutcome, for example using (\ref{eq:normRegTI}). In the case of a saturated model with discrete $S_1$ and $A_1$, we could construct non-parametric tolerance intervals for each pattern of $(s_1,a_1)$ using the pseudooutcome with (\ref{eq:wilksRanks}). However, as expected, will see in our empirical results that this approach is not very effective if in fact the variance of $Y|S_2,A_2$ is not near zero.

\subsection{Inverse Probability Weighting}

One approach to obtaining variance information about $Y$ under $\pihat_2^*$ is to select from our dataset only those trajectories whose second-stage treatment matches what $\pihat_2^*$ would have assigned, i.e., the trajectories $(s_1,a_1,s_2,a_2,y)$ for which $a_2 = \pihat_2^*(s_1,a_1,s_2)$. This subset contains all of the trajectories that have positive probability under the estimated DTR.

\newcommand{\dom}{\mathbf{dom}\;}
Consider a joint distribution over $S_2,A_2,\Pi,A_2^0,A_2^*,M,Y$ conditioned on $S_1$ and $A_1$. (All statements in the remainder of this subsection are implicitly conditioned on $S_1$ and $A_1$; explicitly maintaining this is too cumbersome.) Here, $A_2^0$ is the action chosen by $\pi_2^0$, and $A_2^*$ is the action chosen by $\pihat^*_2$, which is assumed to be deterministic given $S_2$. Let $M$ (for {\em match}\footnote{Note we are not matching trajectories with other trajectories---we are identifying trajectories whose action matches a DTR of interest.}) be $1$ if $A_2^* = A_2^0$, or $0$ otherwise. Let $\Pi$ be binary, and define $A_2$ such that $A_2 = A_2^0$ if $\Pi = 0$ and $A_2 = A_2^*$ if $\Pi = 1$. The dependencies among all of these variables are illustrated in Figure~\ref{fig:deps} using a directed graphical model \citep{Koller2009}. 

The distribution of $Y$ among matched trajectories is governed by $Y|\Pi=0,M=1$. The distribution of $Y$ among trajectories gathered using $\pihat^*_2$ is $Y|\Pi = 1$.
Note that while the distribution of $Y|S_2,A_2,\Pi\is 0,M\is 1$ is identical to the distribution of $Y|S_2,A_2,\Pi\is 1$ due to the conditional independence structure, the distribution of $Y|\Pi=0,M=1$ may be different from $Y|\Pi = 1$ if there is dependence of $M$ on $S_2$. We describe this phenomenon using the following lemma.
\clearpage

\begin{figure}[!h]
\centering
\begin{tikzpicture}[>=stealth']
\node [events] (state) {$S_2$};
\node [events, above right = 0.1cm and 1cm of state](a0){$A_2^0$};
\node [events, below right = 0.1cm and 1.5cm of state](astar){$A_2^*$};
\node [events, below right = 0.1cm and 1cm of a0] (match) {$M$};
\node [events, below right = 0.7cm and 0.1cm of state](a){$A_2$};
\node [events, below = 2cm of state] (y) {$Y$};
\node [events, below = 0.5cm of astar] (pi) {$\Pi$};

\draw [->] (state) -- (a0);
\draw [->] (state) -- (astar);
\draw [->] (a0) -- (match);
\draw [->] (astar) -- (match);
\draw [->] (state) -- (y);
\draw [->] (a0) -- (a);
\draw [->] (astar) -- (a);
\draw [->] (pi) -- (a);
\draw [->] (a) -- (y);
\end{tikzpicture}
\caption{Graphical model depicting the dependence structure of $S_2,A_2^0,A_2^*,A_2,\Pi,M,Y|S_1,A_1$. Note that the structure is the same for all values of $S_1$ and $A_1$.} \label{fig:deps}
\end{figure}
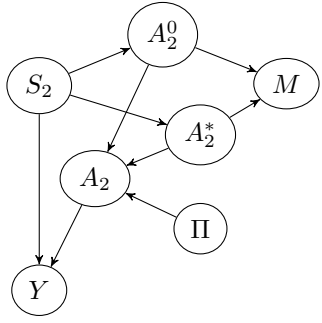

\newtheorem{lemma}{Lemma}

\begin{lemma}\label{lem:ratio}
Let $S_2,A_2^0,A_2^*,A_2,\Pi,M,Y$ be defined as above, and assume $\Pr(S_2) > 0 \implies \Pr(S_2|M\is 1) > 0$. Then
\begin{equation*}
\Pr(Y|\Pi\is 1) = \sum_{S_2} \frac{\Pr(S_2)}{\Pr(S_2|M\is 1)} \Pr(Y,S_2|\Pi \is 0, M\is 1).
\end{equation*}
\end{lemma}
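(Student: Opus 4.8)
The plan is a two-step reduction. First, expand $\Pr(Y\mid\Pi\is 1)$ over the values of $S_2$ by the law of total probability. Second, show that for each fixed $S_2$ the conditional law of $Y$ is unchanged whether we condition on $\Pi\is 1$ or on $\Pi\is 0,M\is 1$, because both events deterministically fix $A_2$ to the action $\pihat^*_2$ takes in state $S_2$. Re-expressing the result through the joint distribution $\Pr(Y,S_2\mid\Pi\is 0,M\is 1)$ then produces the stated reweighting.

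First I would write
\begin{equation*}
\Pr(Y\mid\Pi\is 1)=\sum_{S_2}\Pr(Y\mid S_2,\Pi\is 1)\,\Pr(S_2\mid\Pi\is 1),
\end{equation*}
with the sum restricted to those $S_2$ with $\Pr(S_2)>0$, on which the hypothesis $\Pr(S_2)>0\implies\Pr(S_2\mid M\is 1)>0$ makes every conditional and ratio below well defined (terms with $\Pr(S_2)=0$ contribute $0$ on both sides). Reading off Figure~\ref{fig:deps}, every path between $\Pi$ and $\{S_2,M\}$ passes through $A_2$ as a collider, and $A_2$ is not in the conditioning set, so $\Pi$ is independent of $\{S_2,M\}$; hence $\Pr(S_2\mid\Pi\is 1)=\Pr(S_2)$ and $\Pr(S_2\mid\Pi\is 0,M\is 1)=\Pr(S_2\mid M\is 1)$.

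The key step is the identity $\Pr(Y\mid S_2,\Pi\is 1)=\Pr(Y\mid S_2,\Pi\is 0,M\is 1)$. Fix such an $S_2$. Because $\pihat^*_2$ is deterministic, $A_2^*$ equals a fixed action $a^*$ given $S_2$. Conditioning on $\Pi\is 1$ forces $A_2=A_2^*=a^*$, and conditioning on $\Pi\is 0,M\is 1$ forces $A_2=A_2^0=A_2^*=a^*$; so in both cases we may append the now-vacuous event $A_2\is a^*$ to the conditioning. Since the parents of $Y$ in Figure~\ref{fig:deps} are exactly $S_2$ and $A_2$, the variable $Y$ is independent of $\Pi$ and of $M$ given $(S_2,A_2)$, so both sides reduce to $\Pr(Y\mid S_2,A_2\is a^*)$.

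Finally I would substitute: using the key identity together with $\Pr(S_2\mid\Pi\is 1)=\Pr(S_2)$, and then writing $\Pr(Y\mid S_2,\Pi\is 0,M\is 1)=\Pr(Y,S_2\mid\Pi\is 0,M\is 1)/\Pr(S_2\mid\Pi\is 0,M\is 1)$ and replacing the denominator by $\Pr(S_2\mid M\is 1)$, the total-probability expansion collapses to $\sum_{S_2}\frac{\Pr(S_2)}{\Pr(S_2\mid M\is 1)}\Pr(Y,S_2\mid\Pi\is 0,M\is 1)$, which is the claim. I expect the only delicate part to be the key step: one must verify that conditioning on $M\is 1$ — the child of the colliding pair $A_2^0,A_2^*$ — does not re-couple $Y$ with the policy-selection variables, which holds exactly because the conditioning set already contains $S_2$ and $A_2$, and these block every path from $Y$ to $M$ and to $\Pi$.
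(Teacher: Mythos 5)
Your proof is correct and follows essentially the same route as the paper's: both rest on the conditional independence of $Y$ from $(\Pi,M)$ given $(S_2,A_2)$, the determinism of $A_2^*$ given $S_2$ (which pins $A_2$ to the same action under either conditioning), and the independence of $S_2$ from $\Pi$. The only cosmetic difference is that you marginalize over $S_2$ alone and absorb $A_2$ up front via the determinism argument, whereas the paper sums over $(S_2,A_2)$ jointly and collapses the $A_2$ sum at the end.
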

\begin{proof}
In the following, we abuse notation by allowing $\Pr$ to represent a probability or a density, as appropriate, and we allow $\sum$ to indicate a sum or an integral. The message in any case remains the same.

First we note that
\begin{equation}
\Pr(Y|\Pi\is 0,M\is 1) 
 = \sum_{S_2,A_2} \Pr(Y,S_2,A_2|\Pi\is 0,M\is 1)
 = \sum_{S_2,A_2} \left\{
\begin{aligned}
\Pr(Y|S_2,A_2,\Pi\is 0,M\is 1)\cdot \\
\Pr(A_2|S_2,\Pi \is 0, M\is 1)\cdot \\
\Pr(S_2|\Pi\is 0, M\is 1)
\end{aligned}
\right\}.\label{eq:YgP0M1}
\end{equation}

The data generating distribution under $\pihat^*_2$ is
\begin{align}
\Pr&(Y|\Pi\is 1) \nonumber \\
&= \sum_{S_2,A_2} \Pr(Y,S_2,A_2|\Pi\is 1)\nonumber \\
&= \sum_{S_2,A_2} \Pr(Y|S_2,A_2,\Pi\is 1) \Pr(S_2,A_2|\Pi\is 1)\nonumber \\
&= \sum_{S_2,A_2} \Pr(Y|S_2,A_2,\Pi\is 0,M\is 1) \Pr(S_2,A_2|\Pi\is 1)\label{eq:YgP1}
\end{align}
where the last step follows from conditional independence of $Y$ and $(\Pi,M)$ given $S_2$ and $A_2$. Furthermore, 
\begin{align}
\Pr&(S_2,A_2|\Pi\is 1) \nonumber \\
&= \Pr(A_2|S_2,\Pi\is 1) \Pr(S_2|\Pi\is 1)\nonumber \\
&= \Pr(A_2|S_2,\Pi \is 0, M\is 1) \Pr(S_2|\Pi\is 1)\nonumber \\
&= \Pr(A_2|S_2,\Pi \is 0, M\is 1) \Pr(S_2|\Pi\is 0) \label{eq:S2A2gP1}
\end{align}
where the second step follows because $A_2^*$ is deterministic given $S_2$\footnote{This assumption is critical: if $A^*_0|S_2$ is not deterministic, the relationship between $Y|\Pi\is 1$ and $Y|\Pi\is 0,M\is 1$ is more complicated.} and from the definition of $\Pi$ and $M$, and the third step follows from independence of $S_2$ and $\Pi$. By combining  (\ref{eq:YgP1}) and (\ref{eq:S2A2gP1}) and comparing with (\ref{eq:YgP0M1}), we obtain
\begin{align*}
\Pr(Y|\Pi\is 1) & = \sum_{S_2,A_2} \left\{
\begin{aligned}
\Pr(Y|S_2,A_2,\Pi\is 0,M\is 1)\cdot \\
\Pr(A_2|S_2,\Pi \is 0, M\is 1)\cdot \\
\Pr(S_2|\Pi\is 0)
\end{aligned}
\right\} \\
= & \sum_{S_2} \left\{
\begin{aligned}
\Pr(Y|S_2,\Pi\is 0,M\is 1)\cdot \\
\Pr(S_2|\Pi\is 0)
\end{aligned}
\right\} \\
= & \sum_{S_2} \frac{\Pr(S_2|\Pi\is 0)}{\Pr(S_2|\Pi\is 0,M\is 1)} \Pr(Y,S_2|\Pi \is 0, M\is 1)\\
= & \sum_{S_2} \frac{\Pr(S_2)}{\Pr(S_2|M\is 1)} \Pr(Y,S_2|\Pi \is 0, M\is 1)
\end{align*}
where the final step is by independence of $S_2$ and $\Pi$.
\end{proof}
\newtheorem{corollary}{Corollary}
\begin{corollary}
If $S_2$ and $M$ are independent, then $Y|\Pi\is 0,M\is 1$ has the same distribution as $Y|\Pi\is 1$.
\end{corollary}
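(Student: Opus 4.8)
The plan is to derive the corollary directly from Lemma~\ref{lem:ratio}, since independence of $S_2$ and $M$ collapses the re-weighting factor to unity. First I would note that independence of $S_2$ and $M$ gives $\Pr(S_2|M\is 1) = \Pr(S_2)$ for every value of $S_2$ (here implicitly assuming $\Pr(M\is 1) > 0$, so that conditioning on $M\is 1$ is well defined). In particular $\Pr(S_2) > 0 \implies \Pr(S_2|M\is 1) = \Pr(S_2) > 0$, so the hypothesis of Lemma~\ref{lem:ratio} is satisfied and we may apply it.

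Applying the lemma, the ratio $\Pr(S_2)/\Pr(S_2|M\is 1)$ equals $1$ wherever it is evaluated, so
\begin{equation*}
\Pr(Y|\Pi\is 1) = \sum_{S_2} \frac{\Pr(S_2)}{\Pr(S_2|M\is 1)}\,\Pr(Y,S_2|\Pi\is 0,M\is 1) = \sum_{S_2} \Pr(Y,S_2|\Pi\is 0,M\is 1) = \Pr(Y|\Pi\is 0,M\is 1),
\end{equation*}
where the last equality is marginalization over $S_2$. This is exactly the claimed equality of the two distributions.

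I do not anticipate any real obstacle here: the corollary is essentially a one-line specialization of Lemma~\ref{lem:ratio}. The only points requiring a modicum of care are the usual abuse of notation (letting $\Pr$ denote a density and $\sum$ an integral when $S_2$ is continuous, exactly as in the proof of the lemma) and the implicit regularity assumption $\Pr(M\is 1)>0$ needed for the conditioning to make sense; neither causes difficulty.
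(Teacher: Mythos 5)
Your proof is correct and takes exactly the same route as the paper: the paper's proof is simply ``Follows immediately from Lemma~\ref{lem:ratio},'' and your proposal just spells out the one-line specialization (the ratio $\Pr(S_2)/\Pr(S_2|M\is 1)$ collapses to $1$ under independence, and the sum marginalizes over $S_2$). Your added remarks about verifying the lemma's positivity hypothesis and the implicit assumption $\Pr(M\is 1)>0$ are sound and slightly more careful than the paper itself.
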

\begin{proof}
Follows immediately from Lemma~\ref{lem:ratio}.
\end{proof}

To achieve independence of $S_2$ and $M$, we could ensure during data collection that $A_2^0$ is independent of $S_2$, which in turn can be achieved by equal randomization independent of $S_2$. This is common, but not universal, in SMART designs \citep{Collins2014}. If $A_2^0 | S_2\is s_2 \sim \mathrm{Bernoulli}(\theta^0)$ and $A_2^* | S_2\is s_2 \sim \mathrm{Bernoulli}(\theta^*_{s_2})$, then 
\begin{align*}
\Pr(M\is 1|S_2\is s_2) & = \theta^0 \theta^*_{s_2} + (1 - \theta^0)(1 - \theta^*_{s_2})\\
& = 1 - \theta^0 + \theta^*_{s_2}(2\theta^0 - 1).
\end{align*}
Hence, if $\theta^0 = 0.5$, then $\Pr(M\is 1|S_2) = \Pr(M\is 1) = 0.5$, and $\Pr(S_2|M\is 1) = \Pr(S_2)$. Using this subset of trajectories whose $s_2$ matches $\pihat^*_2(s_2)$, we can regress $Y$ on $S_1$ and $A_1$ to construct tolerance intervals using (\ref{eq:normRegTI}), or, as above, we can construct non-parametric tolerance intervals for each pattern of $(s_1,a_1)$ using ($\ref{eq:wilksRanks}$).

\begin{figure*}
\includegraphics[width=\textwidth]{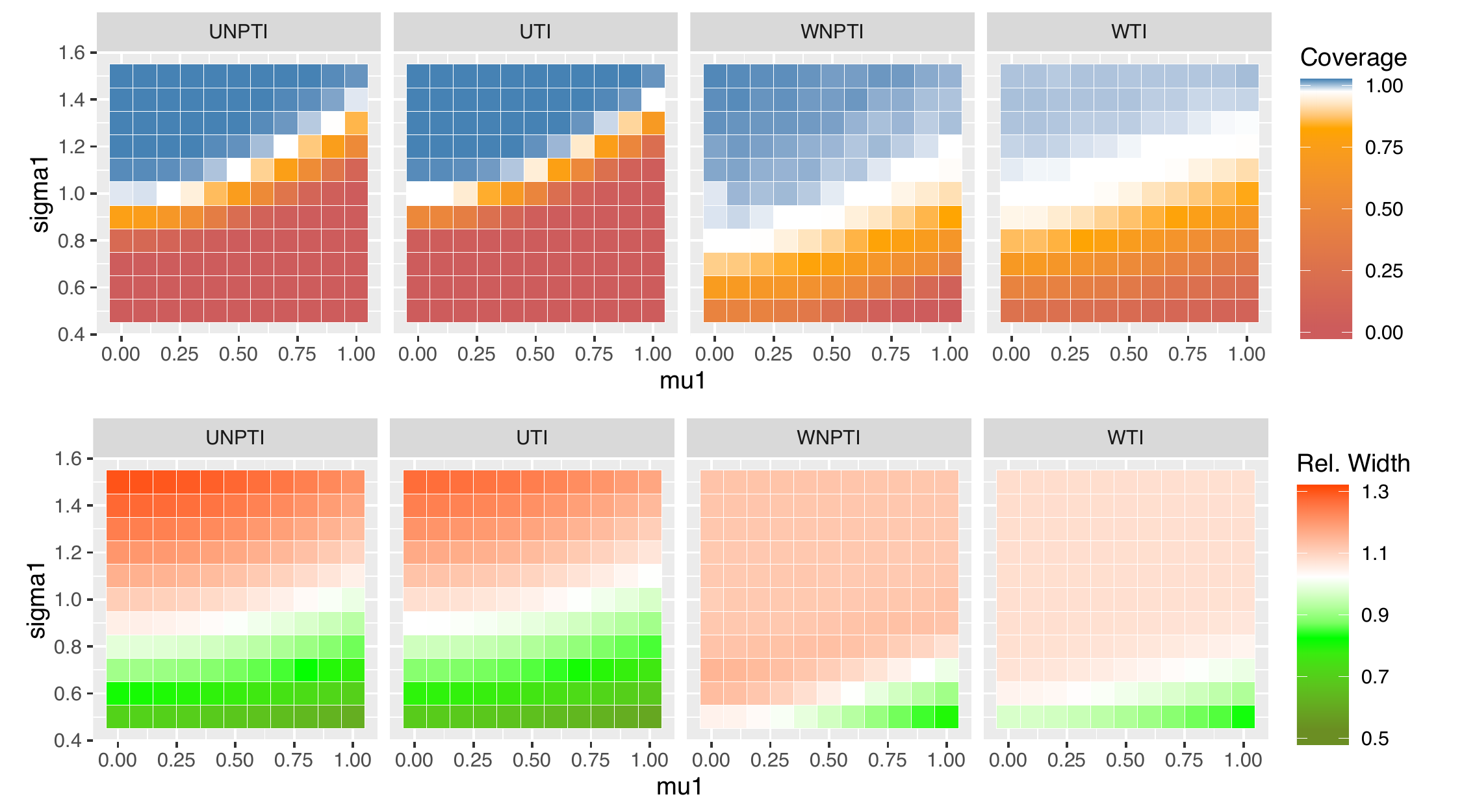}
\caption{Comparison of coverage and width of inverse probability weighted tolerance interval methods. Axes represent a space of simple generative models. Lighter colouration indicates better performance.\label{fig:meanTIs}}
\end{figure*}

Dependence of $M$ on $S_2$ is problematic because of the effect of $S_2$ on $Y$. When $M$ depends on $S_2$, conditioning on $M$ can affect the distribution of $Y$ through $S_2$, meaning that the distribution of $Y|S_1,A_1,\Pi\is 0,M \is 1$ we estimate by collecting data under $\pi^0_2$ is not what we would have obtained had we collected data under $\pihat^*_2$ and ignored (i.e. marginalized over) $M$.

To correct the problem of the distribution of $S_2|S_1,A_1$ among the matched trajectories, we employ {\em inverse probability weighting.} To do so, we construct a propensity score model, not for the probability of treatment, but for the {\em probability of following the estimated optimal DTR}, i.e.\ $\Pr(M\is 1|S_2,S_1,A_1)$. Using this model, we can then re-weight the trajectories so that the distribution of $S_2|M\is 1,S_1,A_1$ matches the distribution of $S_2|S_1,A_1$ as well as possible. The weight function is therefore
\begin{equation}\label{eq:impweights}
w(s_1,a_1,s_2) = \frac{\Pr(S_2\is s_2|S_1\is s_1,A_1\is a_1)}
{\Pr(S_2\is s_2|S_1\is s_1,A_1\is a_1,M\is 1)}.
\end{equation}
These are sometimes known as {\em importance weights}. We note that in causal inference, importance weights are sometimes used to adjust for an association between the probability of receiving treatment and the observed outcome. Here, they are used to adjust for an association between the probability of following the estimated optimal policy and the observed outcome through the variable $S_2$. Note that estimating the two densities in (\ref{eq:impweights}) separately is not necessary to estimate the function $w$; it can be estimated using any density ratio estimation method. Logistic regression is one common approach but many others are available. In related weighting methods for causal inference, practitioners have found that a flexible model for $w$ is often preferable to a simpler one \citep{Ghosh2011}. 

To use the weighted data for building tolerance intervals, we must adapt existing methods for use with the weights. To build normal-theory regression tolerance intervals using the weighted data, we first estimate $\hat{y}|X\is x$ using weighted least squares. We then use the resulting mean estimate, together with a weight-based sandwich estimate of $\hat{\sigma}_{\hat y}$ to construct the tolerance interval as per (\ref{eq:normRegTI}). To build non-parametric tolerance intervals, we obtain weighted estimates of the ranks obtained by linear interpolation of the weighted empirical distribution \citep{Harrell2015}. We then construct the Wilks interval as per (\ref{eq:wilksRanks}).

\newcommand{\bern}{\mathrm{Bernoulli}}
Figure~\ref{fig:meanTIs} shows the empirical results of applying weighted tolerance intervals in a simple scenario. Our goal here is to verify that the weighting scheme can counteract some of the dependence on $M$. (We will evaluate them more fully in the next section.) The data are drawn from a two-variable generative model with $M \sim \bern (0.5)$ and $Y|M\is m \sim \N(\mu_m,\sigma_m)$. Our goal is to produce a tolerance interval for $Y$, marginal over $M$, using only data for which $M\is 1$. The sample size for $M=1$ was $n = 500$, and the weights were computed analytically. Parameters for $Y|M \is 0$ were fixed at $\mu_0 = 0$ and $\sigma_0 = 1$. Parameters for $Y|M \is 1$ were varied to illustrate how performance of the weighted tolerance intervals changed as the distribution of $Y|M\is 1$ deviated from the marginal distribution of $Y$. The top row of heatmaps shows the coverage of each method, that is, the proportion of times out of 1000 Monte Carlo replicates for which the computed tolerance interval had at least $\gamma = 0.9$ probability content. The confidence level $1-\alpha$ was set to $0.95$; in the plot, Monte Carlo coverages that are not statistically significantly different from 0.95 are coloured pure white. Over-coverage is coloured blue, and under-coverage is coloured orange. The second row plots the average width of the tolerance intervals, normalized by the width of the optimal tolerance interval constructed from the true quantiles of $Y$, with unit relative width coloured white.

Methods beginning with U are unweighted, and methods beginning with W are weighted. Methods containing NP are nonparametric, and those without NP are normal-theory. (Table \ref{tab:legend} gives the complete key to the method names.) Note that except when $\mu_1 = 0$ and $\sigma_1 = 1$, $Y$ is nonnormal. As one would expect, performance when $\mu_1=0,\sigma_1=1$ is very good across all methods; in this case, $\Pr(Y|M=1) = \Pr(Y)$, and weighting is not needed. When $\mu_1$ is near zero and/or $\sigma_2$ is larger than $\sigma_0$, most of the mass of $\Pr(Y|M=1)$ overlaps the mass of $\Pr(Y)$, and all intervals tend to over-cover. This is indicated by the blue regions in the upper-left corner of the coverage plots, and is larger in the weighed methods than the unweighted methods. Conversely, when $\Pr(Y|M=1)$ does adequately overlap the mass of $\Pr(Y)$ because $\mu_1$ is farther from $0$ and/or $\sigma_1$ is less than $\sigma_0$, we see undercoverage indicated by the orange in the lower-right of the plots. Again, this is mitigated by weighting. The non-parametric methods provide better coverage than the normal-theory methods; this is not surprising since $Y$ is not normal in most cases. The width plots verify that the weighted methods bring the extreme widths observed from the unweighted methods closer to optimal.

This example verifies that the weighted methods we propose can substantially reduce over- and under-coverage in cases where there is mismatch between the observed distribution and the distribution of interest. However, they cannot eliminate it entirely when the distributions of $Y$ and $Y|M\is 1$ are very different. This is to be expected; estimating say the mean of one distribution using an importance-weighted sample is challenging in practice. Estimating the {\em tails} of that distribution is even more challenging. Nonetheless, there is value in the weighted approach, and we will explore it further in the DTR setting in the next section.

\subsection{Residual Borrowing}
\newcommand{\Ed}{\mathcal{E}}
We now present a different approach to ensuring that our analysis captures the joint distribution $Y,S_2|S_1,A_1$ correctly, and hence captures variability in $Y|S_1,A_1$ correctly when we marginalize over $S_2$. To do so, we return to the $Q$-learning approach, which estimates $\E[Y|S_2,A_2]$ using regression. As discussed above, the pseudooutcome $\tilde y$ for each trajectory represents our best estimate of $\E[Y|S_1 \is s_1, A_1\is a_1, S_2\is s_2, A_2]$ when $A_2 \sim \pi^*_2(s_1,a_1,s_2)$. This estimate is available for all trajectories in our dataset, including those for which $M\is 1$. Rather than na\"ively constructing tolerance intervals based on the regression of $\tilde{Y}$ on $S_1$ and $A_1$, we create a new pseudooutcome $\check y$ for each point: For trajectories with $m\is 1$, we set $\check y = y$. For trajectories with $m = 0$, we set $\check{y} = \tilde y + \epsilon$, where $\epsilon \sim \Ed$, and $\Ed$ is an estimate of the distribution of the residuals among trajectories with $M = 1$. We call this procedure {\em residual borrowing.} We then construct tolerance intervals using the regression of $\check y$ on $S_1$ and $A_1$.

Unlike the $\tilde y$, the $\check y$ retain information about the distribution of $Y|S_2,A_2$. Furthermore, since we use all of the trajectories in our original dataset, our empirical distribution of $S_2|S_1,A_1$ is representative of the true generative model. The distribution $\Ed$ could be the empirical distribution of the appropriate residuals, or it could be a smoothed estimate, e.g., a kernel density estimate. In our simulations, we found that a smoothed estimate works better than sampling from the empirical distribution.

\section{Empirical Results}\label{ss:empirical}

\begin{table}
\caption{Plot Acronyms for Tolerance Intervals\label{tab:legend}}
\begin{tabular}{l|l}
{\bf RBQNPTI} & Residual-Borrowing Non-parametric TI \\
{\bf RBQTI} & Residual-Borrowing Normal-theory TI \\
{\bf UNPTI} & Unweighted Non-parametric TI \\
{\bf UTI} &  Unweighted Normal-theory TI \\
{\bf WNPTI} & Weighted Non-parametric TI \\
{\bf WTI} & Weighted Normal-theory TI \\
\end{tabular}
\end{table}

We now present results of six tolerance interval methods, which are listed in Table~\ref{tab:legend}, using a simulation study. Our goals are to: 1. verify that inverse probability weighted methods can succeed where the unweighted methods fail, and test their limits; and, 2. to assess the difference in performance between the inverse probability weighted methods and the residual borrowing methods. Note that we do not include results from the na\"ive method as it performs very poorly.

The generative model from the study is taken from \cite{Schulte2014}, with modifications. We begin by reviewing that model and discussing our modifications to it; we then present and discuss the performance of our methods.

\subsection{Generative Model}

The generative model has $2$ decision points. $S_1$ is binary, $A_1$ is binary, $S_2$ is continuous, $A_2$ is binary, and $Y$ is continuous. The generative model under the exploration DTR is given by
\newcommand{\psif}{\xi_{\psi}}
\newcommand{\phif}{\xi_{\phi}}
\newcommand{\Ydist}{\mathrm{Ydist}}
{
\begin{align*}
S_{1} & \sim  \text{Bernouli}(0.5) \nonumber\\
A_{1}^0|S_{1}\is s_{1} & \sim  \text{Bernoulli}\{\text{expit}(\boxed{\phif{}}\{\phi_{10}^{0}+\phi_{11}^{0}s_{1}\})\}\\
S_{2}|S_{1}\is s_{1},A_{1}\is a_{1} & \sim  \text{Normal}(\delta_{10}^{0}+\delta_{11}^{0}s_{1}+\delta_{12}^{0}a_{1}+\delta_{13}^{0}s_{1}a_{1},2)\\
A_{2}^0|S_{1}\is s_{1}, S_{2}\is s_{2}, A_{1}\is a_{1} & \sim  \text{Bernoulli}\{\text{expit}(\boxed{\phif{}}\{\phi_{20}^{0}+\phi_{21}^{0}s_{1}+\phi_{22}^{0}a_{1}
+\phi_{23}^{0}s_{2}+\phi_{24}^{0}a_{1}s_{2}+\phi_{25}^{0}s_{2}^{2}\})\}\\
Y|S_{1}\is s_{1},S_{2}\is s_{2},A_{1}\is a_{1},A_{2}\is a_{2} & \sim  \boxed{\Ydist{}}\{\mu_Y(s_{1},s_{2},a_{1},a_{2}),\boxed{\sigma^2_\varepsilon}\}\\
\mu_Y(s_{1},s_{2},a_{1}a_{2}) & = \beta_{20}^{0}+\beta_{21}^{0}s_{1}+\beta_{22}^{0}a_{1}+\beta_{23}^{0}s_{1}a_{1}
+\beta_{24}^{0}s_{2}+\beta_{25}^{0}s_{2}^{2} 
+a_{2}\boxed{\psif}(\psi_{20}^{0}+\psi_{21}^{0}a_{1}+\psi_{22}^{0}s_{2})
\end{align*}
}

\noindent Here, $\text{expit}(x)=e^{x}/ (e^{x}+1)$.
The original model is indexed by
\begin{equation*}
\begin{array}{llrrrrr}
\phi_{1}^{0} & = (\phi_{10}^{0},& \phi_{10}^{0})\\
& = (0.3,& -0.5)\\[1mm]
\delta_{1}^{0} & = (\delta_{10}^{0},& \delta_{11}^{0},& \delta_{12}^{0},& \delta_{13}^{0}) \\
& = (0,& 0.5,& -0.75,& 0.25)\\[1mm]
\phi_{2}^{0} & =(\phi_{20}^{0},& \phi_{21}^{0},& \phi_{22}^{0},& \phi_{23}^{0},& \phi_{24}^{0},& \phi_{25}^{0}) \\
& = (0,& 0.5,& 0.1,& -1,& -0.1,& 0)\\[1mm]
\beta_{2}^{0} & =(\beta_{20}^{0},& \beta_{21}^{0},& \beta_{22}^{0},& \beta_{23}^{0},& \beta_{24}^{0},& \beta_{25}^{0}) \\
& = (3,& 0,& 0.1,& -0.5,& -0.5,& 0)\\[1mm]
\psi_{2}^{0} & =(\psi_{20}^{0},& \psi_{21}^{0},& \psi_{22}^{0})\\
& = (1,& 0.25,& 0.5)
\end{array}
\end{equation*}
to which we have added four parameters: $\psif$ is a factor multiplying $\psi_2^0$, its default is 1; $\phif$ is a factor multiplying $\phi_2^0$, its default is 1; $\Ydist(\mu,\sigma_\varepsilon^2)$ gives the conditional distribution of $Y$ with given mean and variance; its default is the normal distribution and the default $\sigma^2_\varepsilon$ is 10. We have emphasized these parameters by displaying them in boxes.

	Our parameter $\phif$ allows us to control the degree to which state information influences treatment selection under the exploration (data-gathering) DTR. For $\phif = 1$, we have the original exploration used by Schulte et al., and for $\phif = 0$, we have uniform randomization over treatments independent of state and previous treatment. $\psif$ allows us to control the effect of treatment $A_2$ on $Y$. For $\psif = 1$ we have the treatment effect specified by Schulte et al., and for $\psif = 0$ we have no treatment effect at the second stage. $\Ydist$ allows us to control the shape of the error distribution to see its effect on the tolerance interval methods; Schulte et al.\ used a normal error, but we will explore heavier and lighter-tailed errors while holding variance constant.
	
	This family of generative models allows us to explore what happens to the performance of tolerance interval methods when we have dependence of $S_2$ on $A_2$ during the generating process. While most of the SMART studies we are aware of use a simple randomization strategy where the distribution of $A_2$ does not depend on $S_2$ (which is the case here when e.g.\ $\phif = 0$, giving a simple 50:50 randomization strategy), we expect that more studies akin to ``adaptive trials" with state-dependent randomization will become attractive in the future.

\newcommand{\Ind}{I}

Based on the function\footnote{Denoted $m$ by Schulte et al.} $\mu_Y$ which determines the expected value of $Y|S_1,A_1,S_2,A_2$, we can immediately see that the optimal second stage decision function is 
\begin{align*}
\pi^*_2(a_1,s_2) & = \arg\max_{a_2'} a_2' \psif (\psi_{20}^{0}+ \psi_{21}^{0} a_{1}+\psi_{22}^{0} s_{2}) \\
& = \Ind \{ \psif (\psi_{20}^{0}+ \psi_{21}^{0} a_{1}+\psi_{22}^{0} s_{2}) > 0\}.
\end{align*}

\subsection{Working Model}

Our working model for $Q_2$ is
\begin{align}\label{eq:Q2working}
Q_2(s_1,a_1,s_2,a_2;\beta_2,\psi_2) = 
\beta_{20}+\beta_{21}s_{1}+\beta_{22}a_{1}+\beta_{23}s_{1}a_{1}
+\beta_{24}s_{2}+\beta_{25}s_{2}^{2} 
+ a_{2}(\psi_{20}+\psi_{21}a_{1}+\psi_{22}s_{2})
\end{align}
Having computed least squares estimates $\hat\beta_2$ and $\hat\psi_2$, our estimate of the optimal second-stage decision function is
\begin{equation}\label{eq:pi2working}
\pihat_2^*(s_1,a_1) = \Ind \{\hat\psi_{20}+ \hat\psi_{21} a_{1}+\hat\psi_{22}^{0} s_{2} > 0\}
\end{equation}
and the pseudooutcome for the $i$th trajectory is
\begin{align*}
\tilde{y}_i = \hat\beta_{20}+\hat\beta_{21}s_{1i}+\hat\beta_{22}a_{1i}+\hat\beta_{23}s_{1i}a_{1i}
+\hat\beta_{24}s_{2i}+\hat\beta_{25}s_{2i}^{2} 
+ |\hat\psi_{20}+ \hat\psi_{21} a_{1i}+\hat\psi_{22}^{0} s_{2i}|_+.
\end{align*}

Our working model for $Q_1$ is the saturated model 
\begin{equation}\label{eq:Q1working}
Q_1(s_1,a_1;\beta_1,\psi_1) = \beta_{10}+\beta_{21}s_{1}+a_1(\psi_{10} + \psi_{11}s_1).
\end{equation}
Having computed least squares estimates $\hat\beta_2$ and $\hat\psi_2$ by regressing the pseudooutcomes on $s_1$ and $a_1$, our estimate of the optimal first-stage decision function would be\footnote{\cite{Schulte2014} give the true optimal values of $\beta_1$ and $\psi_{1}$ as a function of the other model parameters.}
\begin{equation}
\pihat_1^*(s_1,a_1) = \Ind \{\hat\psi_{10}+ \hat\psi_{11} a_{1} > 0\}.
\end{equation}

\subsection{Tolerance Intervals}

In many studies of DTR methods, the focus is on point and interval estimates of the optimal stage 1 decision parameters \citep{Chakraborty2010,Chakraborty2013,Chakraborty2013a,laber14dynamic,Chakraborty2014}. In this work, we will investigate methods for constructing tolerance intervals for
\begin{align*}
Y|S_1\is 0,A_1\is 0;\pihat^*_2 ~~~~&~~~~ Y|S_1\is 0,A_1\is 1;\pihat^*_2\\
Y|S_1\is 1,A_1\is 0;\pihat^*_2 ~~~~&~~~~ Y|S_1\is 1,A_1\is 1;\pihat^*_2.
\end{align*}
Note that our goal is to construct tolerance intervals for $Y$ under the {\em estimated optimal regime} rather than under the optimal regime. The reason for this is pragmatic: we assume that it is the estimated optimal regime that would be deployed in future to support decision-making. 

We begin by estimating $\pihat^*_2$ using the working models (\ref{eq:Q2working},\ref{eq:pi2working}). We then compute the pseudooutcome $\tilde y_i$ for each trajectory, and the match indicator $m_i = \Ind\{\pihat^*_2(s_{1i},a_{1i},s_{2i}) = a_{2i}\}$.

\subsubsection{Unweighted Methods}

To construct the unweighted normal-theory TIs, we regress $y$ on $s_1$ and $a_1$ according to working model (\ref{eq:Q1working}) but using only trajectories with $m=1$. We then apply (\ref{eq:normRegTI}) to construct the four tolerance intervals.

To construct the unweighted nonparametric TIs, we divide the trajectories with $m = 1$ into four mutually exclusive groups according to their $(s_1,a_1)$ values. We then construct the four tolerance intervals by applying the Wilks method (\ref{eq:wilksRanks}) to each group.

\subsubsection{Weighted Methods}

To construct the weights, we first form kernel density estimates $\hat f_\Ed(s_2;a_1,a_1,m\is 1)$ for $S_2|S_1\is s_1,A_1\is a_1,M\is 1$ and $\hat f_\Ed(s_2;s_1,a_1)$ for $S_2|S_1\is s_1,A_1\is a_1$. The weight for a trajectory with index $i$ that has $m = 1$ is then given by
\begin{equation}
w_i = \frac{\hat{f}_\Ed (s_{2i};s_{1i},a_{1i})}{\hat{f}_\Ed (s_{2i};s_{1i},a_{1i},m=1)}.
\end{equation}
While logistic regression might be viewed as a more obvious choice for this task, we found that its attendant monotonicity assumptions were often violated, and that the pair of kernel density estimates were the simplest way to produce a more flexible model in this low-dimensional setting.

To construct the weighted normal-theory TIs, as above we compute a weighted regression of $y$ on $s_1$ and $a_1$ according to working model (\ref{eq:Q1working}) but using only trajectories with $m=1$. We then apply (\ref{eq:normRegTI}) to construct the four tolerance intervals; in this case, we use the sandwich estimate \citep{Huber1967,White1980} with the weights to compute $\hat\sigma_{Y|X=x}$. This makes the method somewhat more robust.

To construct the unweighted nonparametric TIs, we divide the trajectories with $m = 1$ into four mutually exclusive groups according to their $(s_1,a_1)$ values. We then construct the four tolerance intervals by applying our weighted modification of the Wilks method (\ref{eq:wilksRanks}) to each group.

\subsubsection{Residual Borrowing}

For the residual borrowing methods, within each $(s_1,a_1)$ group, we first form a kernel density estimate $\hat{f}_R(r;s_1,a_1)$ using the residuals $y_i - \tilde{y}_i$ among the trajectories with $m = 1$.  We then set $\check{y}_i = y_i$ for each trajectory with $m_i=1$, and sample $\check{y}_i$ from the kernel density estimate for trajectories with $m_i=0$. We then either regress $\check{y}_i$ using the working models to create the regression tolerance intervals, or we again divide up the data according to $s_1$ and $a_1$ to construct non-parametric tolerance intervals.

\subsection{Results}

Using the foregoing generative model, working models, and tolerance interval methods, we ran a suite of simulations to investigate performance. Experiments varied by $\phif,\psif,\sigma^2_\varepsilon$, and $\Ydist$, for a total of $1,089$ different experimental settings. Both $\phif$ and $\psif$ were varied from $0$ to $1$ in $0.1$ increments, and $\sigma^2_\varepsilon$ took values in $\{10,1,0.1\}$. We examined settings with $\Ydist$ as normal, uniform, and $t$ with 3 degrees of freedom, each scaled to have the appropriate $\sigma^2_\varepsilon$. For each setting, we drew 1000 simulated datasets each of size $n=1000$, computed tolerance intervals using each of the six methods, and evaluated their {\em content}, that is, what proportion of $Y$ was captured by each interval, and their {\em relative width}, given by $(\TIu - \TIl) / h^*$, where $h*$ is the width of the optimal tolerance interval computed using the $\gamma/2$ and $1 - \gamma/2$ quantiles of the true distribution. For all experiments, we set $1 - \alpha = 0.95$ and $\gamma = 0.9$. All kernel density estimates were one-dimensional, and used the default optimal bandwidth. All experimental code was written in R \citep{R2015}, and is publicly available.

Figures~\ref{fig:10_ALL}, \ref{fig:1_ALL}, and \ref{fig:.1_ALL} display the results of all of our experiments as heatmaps using the same approach as Figure~\ref{fig:meanTIs}. Monte Carlo coverages that are not statistically significantly different from 0.95 are coloured pure white, Over-coverage is coloured blue, and under-coverage is coloured orange. The second row of each subplot gives the average width of the tolerance intervals.

Figure \ref{fig:10_NORM} contains the original model setting proposed by \cite{Schulte2014} in the upper-right corners of its heatmaps. In this setting, the weighted and unweighted normal-theory tolerance intervals undercover slightly, while the weighted and unweighted non-parametric methods overcover, and are much wider. The residual-borrowing methods perform best in this setting, with the normal-theory residual-borrowing intervals achieving near-nominal coverage with modest width. There is relatively little variation in coverage and width across $\phif$ and $\psif$ in this setting, we believe because the noise level is quite high relative to the effect of $a_2$ even when $\psif = 1$. In Figure \ref{fig:10_T3}, $\Ydist$ was chosen to be a $t$-distribution with 3 degrees of freedom, scaled to have variance $\sigma^2_\varepsilon = 10$ and shifted by $\mu_Y$. In this heavy-tailed setting, it is the non-parametric residual-borrowing method that slightly undercovers, while the other methods overcover somewhat. As in the normal case, the weighted and unweighted nonparametric methods are very wide. Figure \ref{fig:10_UNIF} uses a scaled and shifted uniform distribution for $\Ydist$, again maintaining $\sigma^2_\varepsilon = 10$. In this light-tailed setting, in contrast to Figure~\ref{fig:10_T3}, it is the normal-theory intervals which tend to be wide, while the non-parametric ones are narrower. The residual-borrowing intervals are wide as well. All intervals achieve nominal or greater coverage in this setting.

\begin{figure}
\begin{subfigure}{\columnwidth}
\includegraphics[width=\columnwidth]{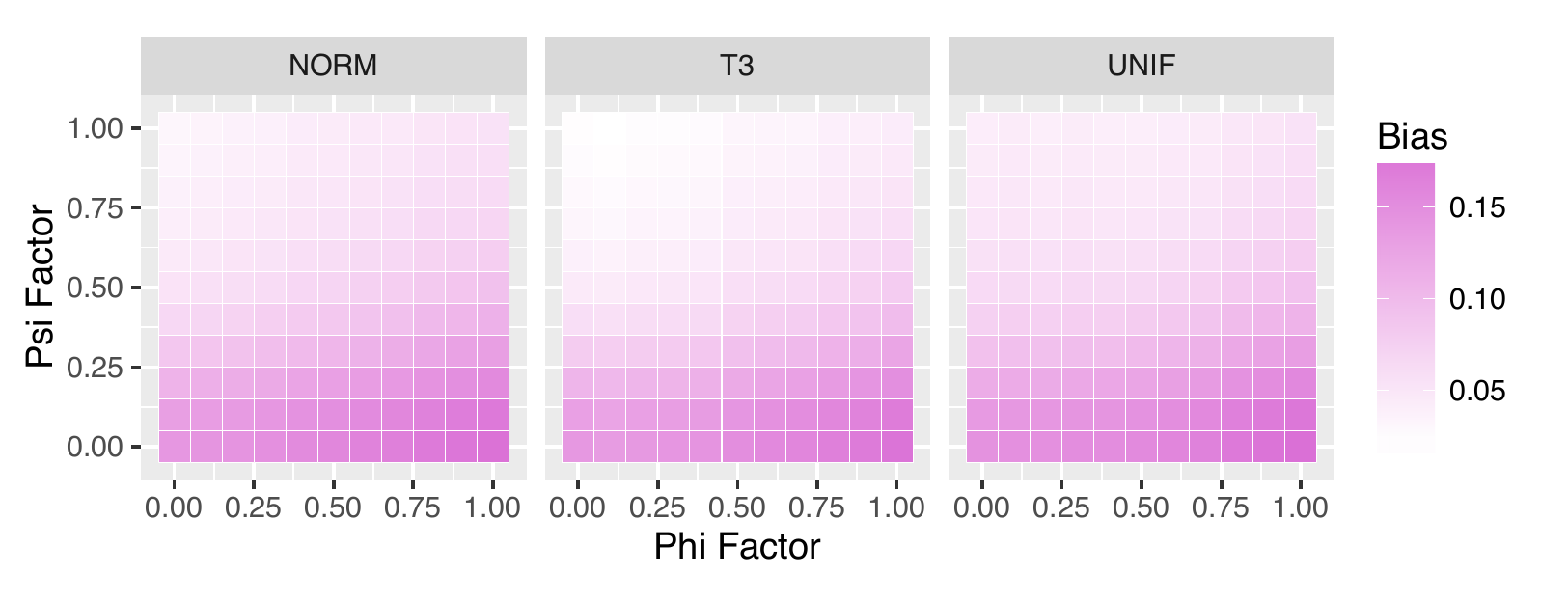}
\caption{$\sigma^2_\varepsilon ={10}$ ~~~~~~~~\\ ~\\}
\end{subfigure}
\begin{subfigure}{\columnwidth}
\includegraphics[width=\columnwidth]{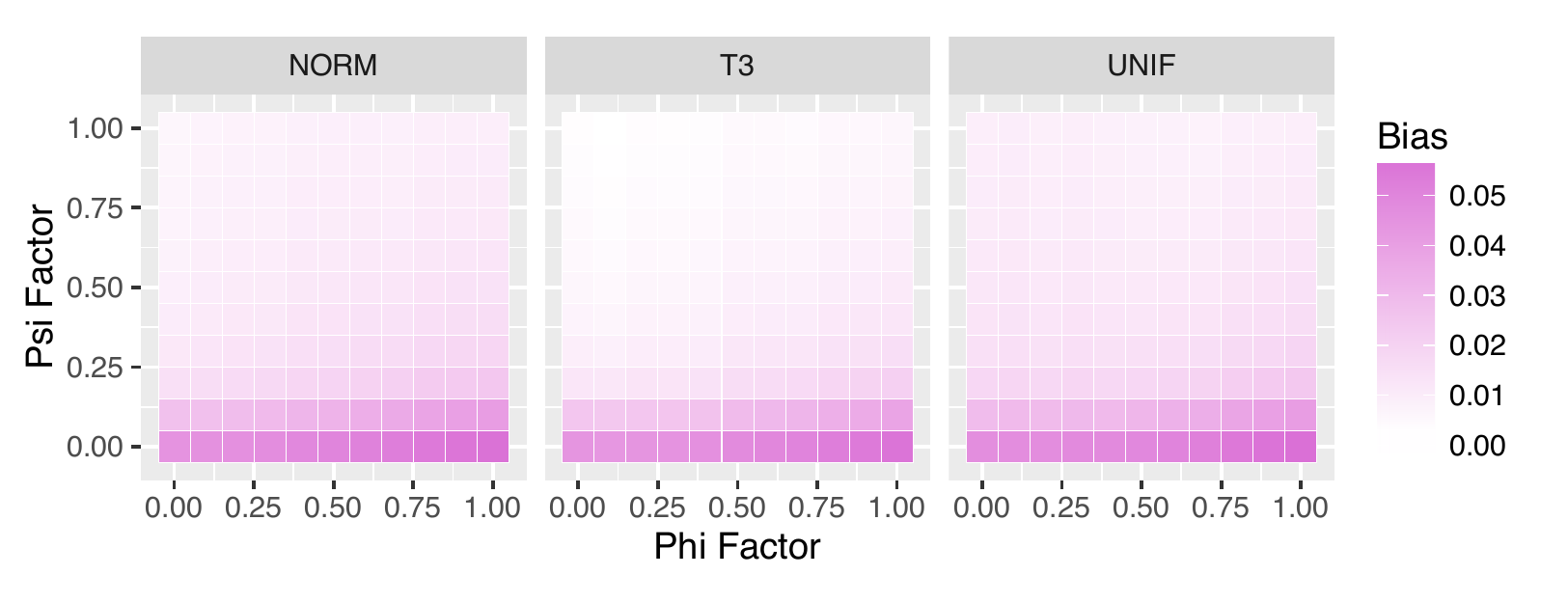}
\caption{$\sigma^2_\varepsilon=1$ ~~~~~~~~\\ ~\\}
\end{subfigure}
\begin{subfigure}{\columnwidth}
\includegraphics[width=\columnwidth]{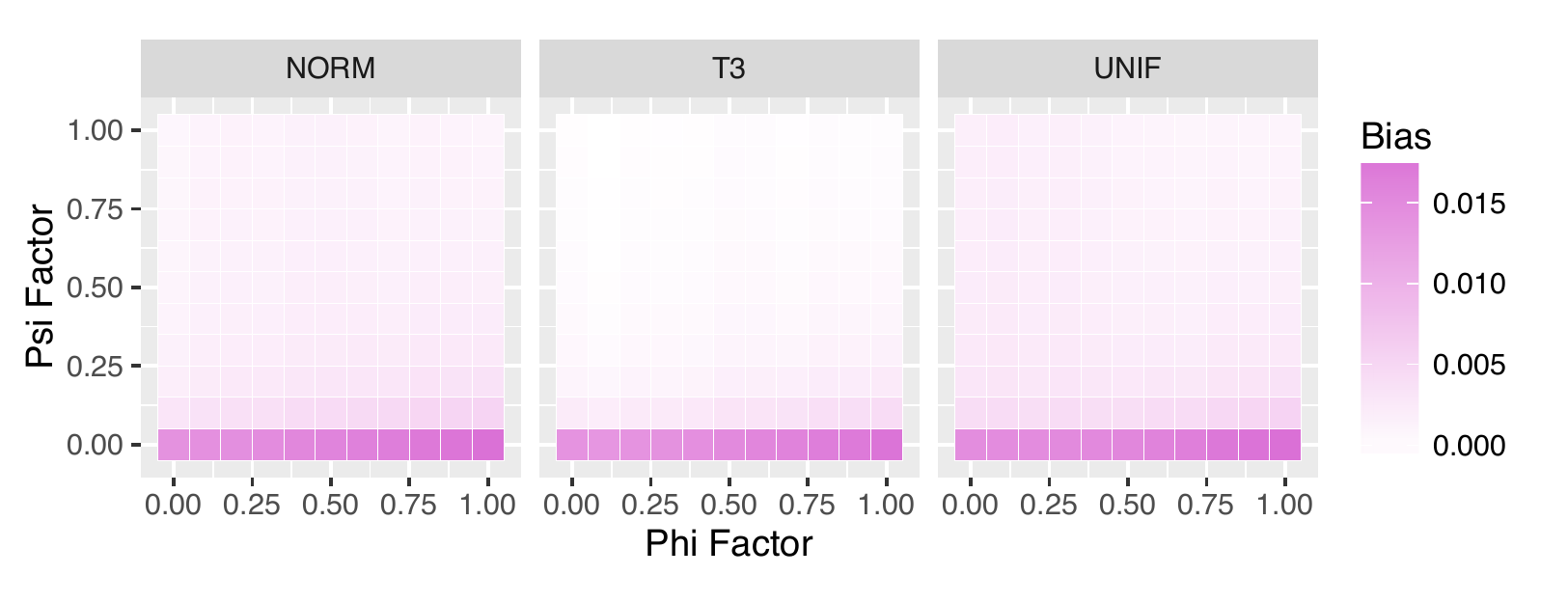}
\caption{$\sigma^2_\varepsilon=0.1$ ~~~~~~~~\\ ~\\}
\end{subfigure}
\caption{\label{fig:BIAS}Bias in the estimated value of the optimal policy. This is a surrogate measure of non-regularity; note that maximal bias occurs when $\psi$, which controls the effect of $A_2$, is small. Phi Factor ($\phif$ in the text) controls the effect of covariates on the exploration DTR, and Psi Factor ($\psif$ in the text) controls the effect of $A_2$ on the conditional mean of $Y$.}
\end{figure}

We see a striking change as we examine the lower-noise settings in Figure~\ref{fig:1_ALL}, which have $\sigma^2_\varepsilon = 1$. Here, we start to see dependence of performance on $\psif$ and $\phif$. As in Figure~\ref{fig:10_NORM}, in Figure~\ref{fig:1_NORM} we see the normal theory intervals undercovering, although we now see a definite trend that worsens as $\phif$ increases, and as $\psif$ decreases. We also see this trend among the non-parametric methods, which range from overcovering to undercovering as we move across $\phif$ and $\psif$. Overall, we see the greatest coverage when the effect of $A_2$ is quite strong (topmost rows), or if the dependence of $A_2^0$ on $S_2$ is weak (leftmost rows.) As we discussed earlier, when $\phif = 0$ (leftmost columns) there is no dependence of $M$ on $S2$, and thus weighting is unnecessary. Furthermore, we not only obtain a uniform probability of $M=1$ across $S_2$, but also a uniform probability of $A_2^0$ across $S_2$. This uniformity likely leads to improved estimates of $Y|S_2,A_2$, and in turn to better coverage of the tolerance intervals. The decrease in performance for low $\psif$ may be due to non-regularity: when $\psif=0$, there is in fact no effect of $A_2$ on $Y$. However, assuming continuity of the appropriate distributions, our estimated $\hat\psi_2$ will be nonzero almost always, and our plug-in estimate of the value of $\pihat_2^*$ will be positive almost always. Defining $\hat{a}_{2i}^* = \pihat_2^*(s_{1i},a_{1i},s_{2i})$, the empirical bias in the value of $\pihat_2^*$ is 
\begin{multline*}
\sum_i \hat{a}_{2i}^*(\hat\psi_{20} + \hat\psi_{21} a_{1i}+\hat\psi_{22} s_{2i}) - \\
\hat{a}_{2i}^*\psif (\psi_{20}^0 + \psi_{21}^0 a_{1i}+ \psi_{22}^{0} s_{2i}).
\end{multline*}
Figure~\ref{fig:BIAS} shows the average empirical bias in our estimate of the average value of using $\pihat_2^*$, as a function of $\phif$ and $\psif$. We can see that the bias is concentrated at the bottom of the plots, near $\psif = 0$. This is precisely where there is more than one nearly-optimal action and non-regularity is known to be a problem. 

We see the problems worsen in Figure~\ref{fig:.1_ALL}, where we set $\sigma^2_\epsilon = 0.1$. We hypothesise that this is because proportionately even more of the variability in $Y$ is attributable to variability in $S_2$, and accurate estimation of $Y|S_2,A_2$ becomes that much more important. All of the matched subset methods have severe undercoverage for large values of $\phi$ and low values of $\psi$. Weighted methods mitigate this. The residual-borrowing methods achieve much better coverage, but at the cost of much wider intervals.

\subsection{Discussion}\label{ss:discussion}

Based on our simulation study experiments, we believe that designing the exploration DTR to have uniform randomization over actions is highly beneficial for estimating tolerance intervals. When this is the case, all methods gave reasonable results in almost all scenarios. Some knowledge of the error distribution may help choose a method that will result in reasonable widths. If uniform exploration is not possible, the residual-borrowing methods appear to be the most robust to undercoverage, followed by the weighted methods, followed by the unweighted methods. That said, it would be prudent to perform a simulation study under a scenario ``close'' to the analysis at hand if possible; to facilitate this we have released our R code \citep{R2015} so that researchers and practitioners can explore other scenarios.

\section{Example: STAR*D}

We present an example of the application of the TI methods we have described to real-world clinical trial data. The Sequenced Treatment Alternatives to Relieve Depression (STAR*D) study followed an initial population of 4041 patients as they were treated using different antidepressant medications and cognitive behavioural therapy \citep{Rush2004}. There were a total of three decision points at which randomisation took place, with different treatment options available at each one. Outcomes were measured using the clinician-rated Quick Inventory of Depressive Symptomatology \citep{Rush2003}. We will examine two such decision points corresponding to Level 2 and Level 3 of the study, which will correspond to the first and second decision points in our analysis.

We construct tolerance intervals for STAR*D at Level 2 (our decision point 1), having estimated a $Q$ function and estimated optimal policy for Level 3 (our decision point 2.) We use exactly the same $Q$-learning working model and estimation procedure as \cite{Schulte2014} to develop $\pihat_2^*$ and the pseudooutcomes; we refer the interested reader to their work for more details. In summary, the state variables we use are up-to-date QIDS measures of patient symptoms, and the outcomes we use are based on later QIDS measurements that have been negated so that higher values are preferable. At decision point 1, we elect to use a binary state variable indicating whether the previous slope in QIDS score for a patient is greater than the median. Higher QIDS scores indicate worse symptom levels, so this state variable effectively identifies patients whose disease status is worsening most quickly. At both decision points, the treatment choice is whether to ``augment'' the current medication with another, or to ``switch'' to another medication altogether.

We applied the six TI methods described previously to the data, using the choice to switch or augment treatment as $A_1$, and letting $S_1$ be an indicator variable for QIDS slope being greater than the median slope. We see that generally the intervals are quite wide, and that there is severe overlap of TIs for different treatments. This reflects the high variance and low treatment effect we observe in this data. However, the intervals do capture prognostic information: the intervals for $S_1=\mbox{``Yes''}$ (indicating severely worsening symptoms) are wider, with a decreased lower bound indicating that such patients may have poorer outcomes relative to those with more stable symptoms prior to the decision point. The maximum attainable outcome in this problem is $0$, since QIDS cannot go below $0$. We note that the parametric TI methods can produce upper bounds greater than $0$ and lower bounds that appear to be a bit optimistic. Hence, we suggest that one of the non-parametric methods would be a sensible choice for STAR*D.

\begin{figure*}
\includegraphics[width=\textwidth]{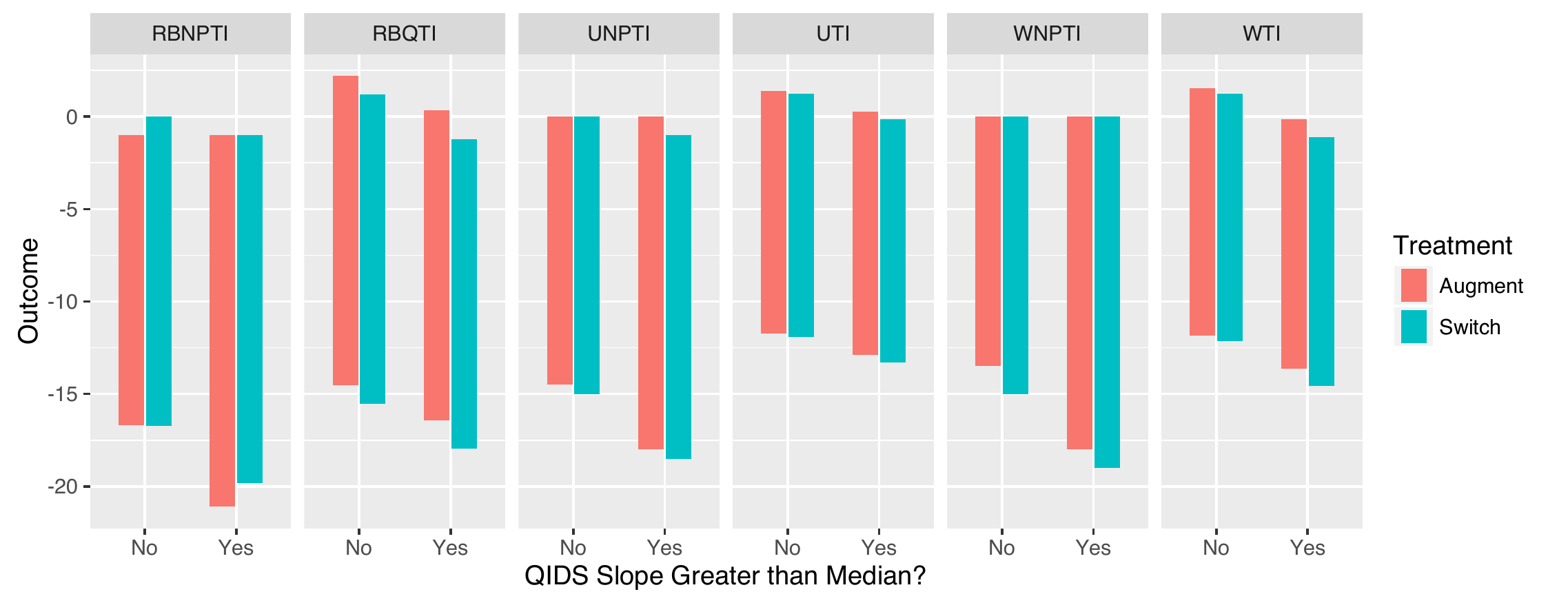}
\caption{\label{fig:stard}Tolerance intervals for STAR*D at Level 2. The six TI methods used previously are applied to the data, using the choice to switch or augment treatment as $A_1$, and letting $S_1$ be an indicator variable for previous QIDS slope being greater than the median. In this setting, higher outcomes are preferable, but higher QIDS scores (and slopes) indicate worse symptoms.}
\end{figure*}

\section{Conclusion}\label{ss:conclusion}

We have developed and evaluated tolerance interval methods for dynamic treatment regimes that can provide more detailed prognostic information to patients who will follow an estimated optimal regime. We began by reviewing in detail different interval estimation and prediction methods and then adapting them to the DTR setting. We illustrated some of the challenges associated with tolerance interval estimation stemming from the fact that we do not typically have data that were generated from the estimated optimal regime. We gave an extensive empirical evaluation of the methods and discussed several practical aspects of method choice. We demonstrated the methods using data from a pragmatic clinical trial. We now take the opportunity to discuss future directions of research on tolerance intervals for dynamic treatment regimes.

\subsection{Future Directions}
Our work lays the foundation for extending tolerance interval methods for dynamic treatment regimes in several different directions.

The normal theory TI methods we employed used an estimate of the residual distribution that is pooled over $S_1$ and $A_1$. The non-parametric methods estimated the residual distributions separately for the different discrete $S_1,A_1$. A compromise solution that partially shares residual information across different configurations of $(S_1,A_1)$, perhaps in a data-driven, adaptive fashion, may provide improved performance and wider applicability. (Note that the non-parametric methods we described are not applicable if $S_1$ is continuous.)

We have treated DTRs with two decision points, but in general we would like to have tolerance intervals for multiple decision points. Such methods would potentially have to address uncertainty stemming from ``parameter sharing,'' across time points. It is known \citep{Chakraborty2016} that the effects of model misspecification and non-regularity can compound in the multiple decision point setting, and the impact of this on tolerance intervals is not yet known.

While we assumed a single outcome measure $Y$ throughout our work, several methods have been described for estimating DTRs in the presence of multiple outcomes \citep{lizotte12linear,laber14setvalued,lizotte15momdps}. Joint tolerance intervals/tolerance regions for this setting would be equally important as they are in the standard, single-outcome setting.

We observed some problems associated with biased estimates of the value of the estimated policy, which is caused by non-regularity. The problem of non-regularity in optimal DTR estimation has been addressed in the confidence interval setting using different approaches, including pre-testing \citep{laber14dynamic} and shrinkage \citep{Chakraborty2010,Chakraborty2013}. We have not explicitly incorporated either of these ideas in the methods we presented; doing so may lead to methods that are more robust to small or zero treatment effects at the second stage yet do not pay a high cost in terms of width.

\cite{Fernholz2001} have presented a method to re-calibrate tolerance intervals using the bootstrap. They propose a bootstrap method to estimate the content $\gamma$ of a given tolerance interval---first they construct a tolerance interval with nominal (or ``requested'') content $\gamma$, but then they use the bootstrap to estimate what the actual content. This could potentially be used to identify when tolerance methods fail on dynamic treatment regimes, or they may be used simply to give more accurate confidence information to the decision maker. For example, we may attempt to construct a tolerance interval for $\gamma = 0.9$, but if it turns out that the actual content is $0.85$, the interval may still be useful if the decision-maker is made aware of this fact. Future work to adapt the calibration procedure could prove promising.

Finally, a Bayesian approach to the predictive estimation problem may prove fruitful in some settings. \cite{Saarela2015} have laid groundwork for this direction of research.


\subsection*{Acknowledgements}
This work was supported by the Natural Sciences and Engineering Research Council of Canada. Data used in the preparation of this article were obtained from the limited access datasets distributed from the NIH-supported ``Sequenced Treatment Alternatives to Relieve Depression'' (STAR*D). STAR*D focused on non-psychotic major depressive disorder in adults seen in outpatient settings. The primary purpose of this research study was to determine which treatments work best if the first treatment with medication does not produce an acceptable response. The study was supported by NIMH Contract \#N01MH90003 to the University of Texas Southwestern Medical Center. The ClinicalTrials.gov identifier is NCT00021528.

\bibliographystyle{plainnat}

\bibliography{predint}

\begin{figure*}
\begin{subfigure}{\textwidth}
\includegraphics[width=\textwidth]{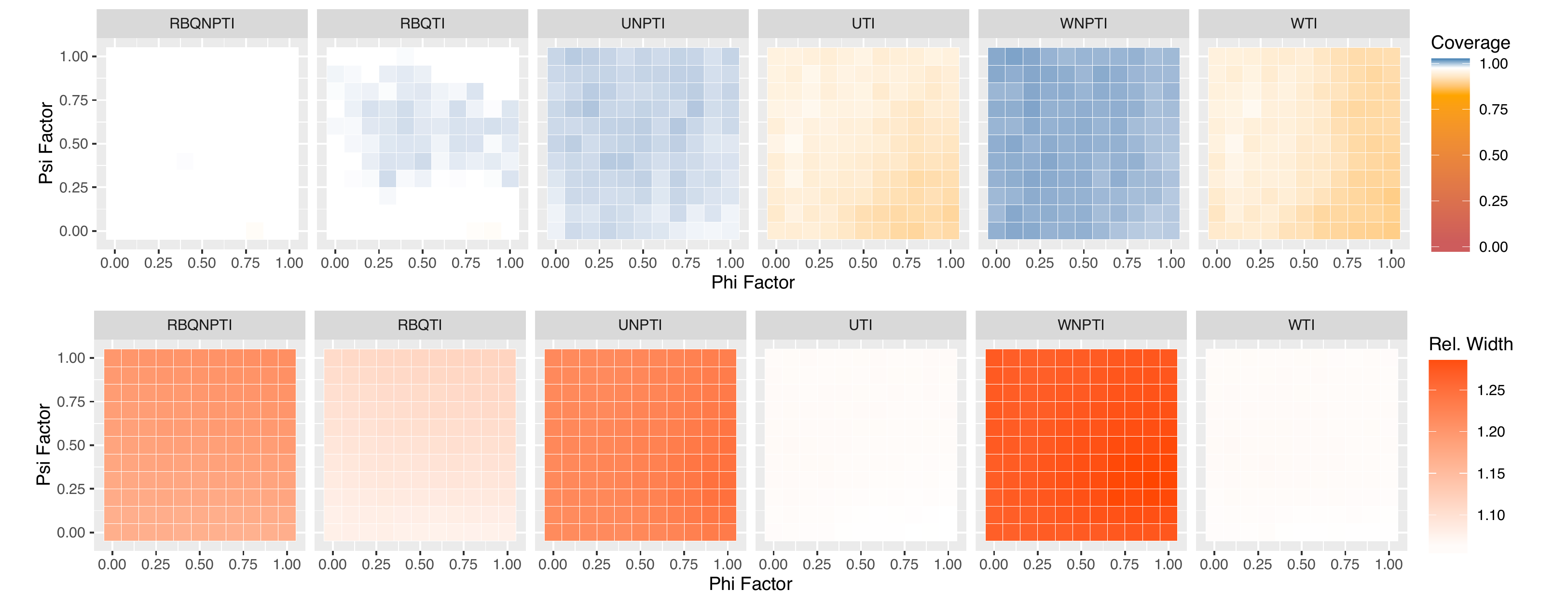}
\caption{\label{fig:10_NORM}{\bf normal} errors ~~~~~~~~~~~\\ ~\\}
\end{subfigure}
\begin{subfigure}{\textwidth}
\includegraphics[width=\textwidth]{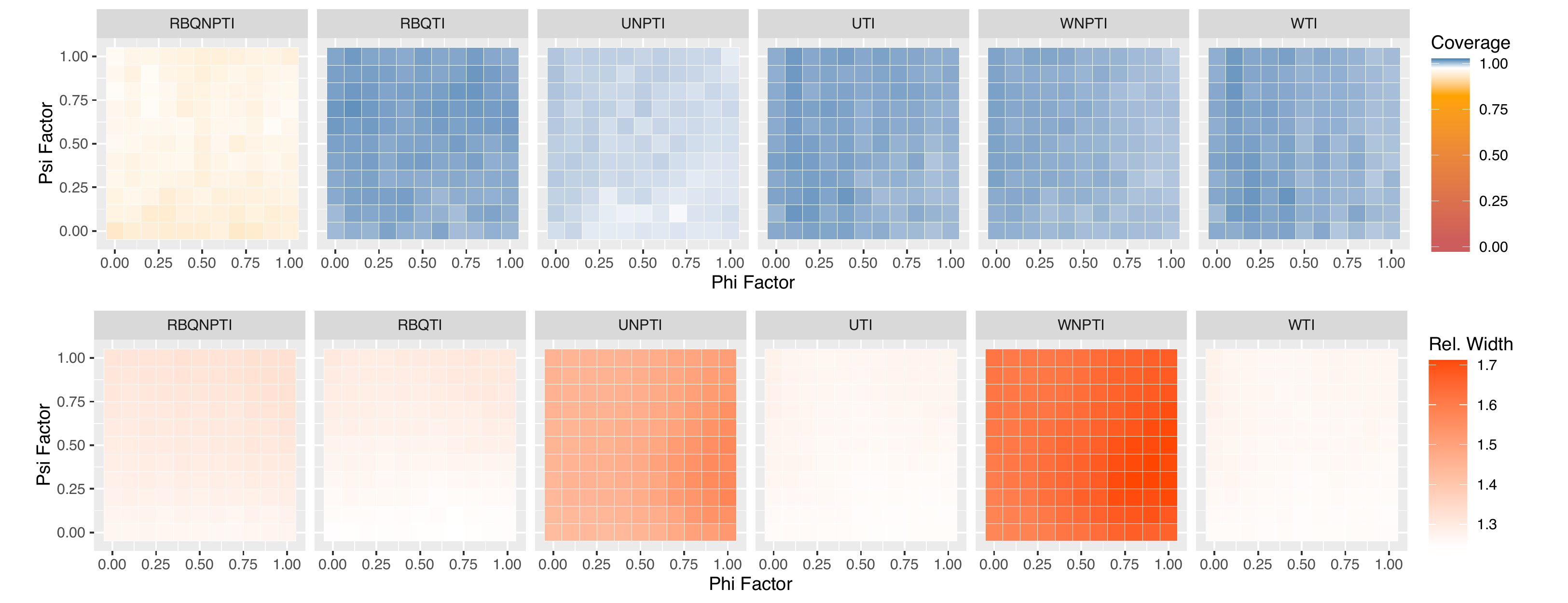}
\caption{\label{fig:10_T3}{\bf {\em t}-distributed} errors (3 degrees of freedom) ~~~~~~~~~~~\\ ~\\}
\end{subfigure}
\begin{subfigure}{\textwidth}
\includegraphics[width=\textwidth]{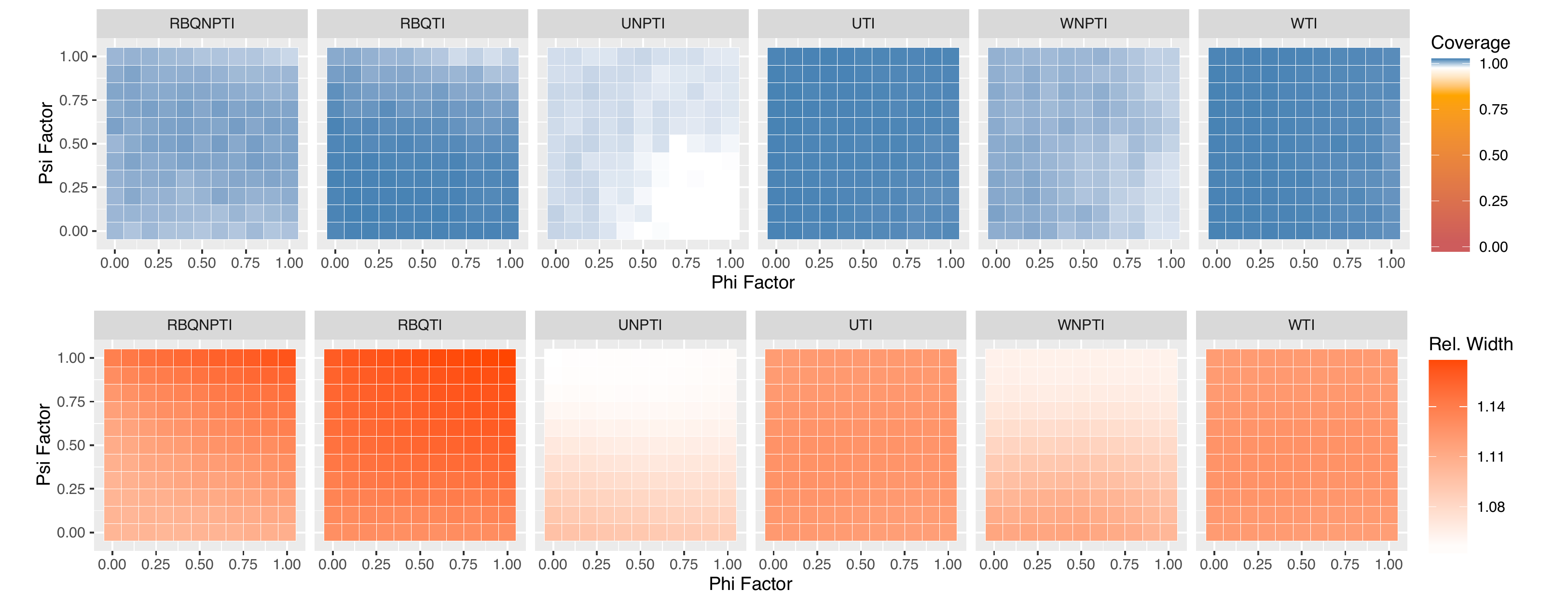}
\caption{\label{fig:10_UNIF}{\bf uniform} errors ~~~~~~~~~~~\\ ~\\}
\end{subfigure}
\caption{\label{fig:10_ALL}Coverage and average Relative Width for all methods and $\sigma^2_\varepsilon ={10}$. Phi Factor ($\phif$ in the text) controls the effect of covariates on the exploration DTR, and Psi Factor ($\psif$ in the text) controls the effect of $A_2$ on the conditional mean of $Y$.}
\end{figure*}
\clearpage

\begin{figure*}
\begin{subfigure}{\textwidth}
\includegraphics[width=\textwidth]{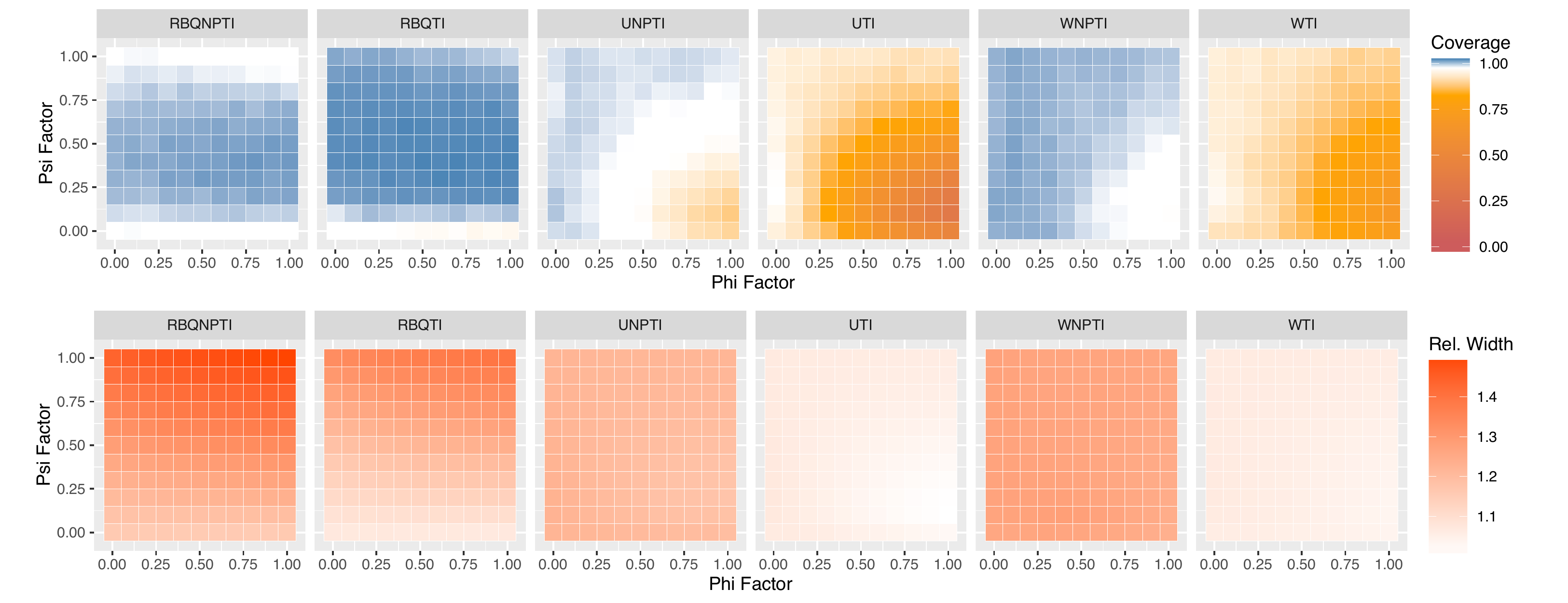}
\caption{\label{fig:1_NORM}{\bf normal} errors ~~~~~~~~~~~\\ ~\\}
\end{subfigure}
\begin{subfigure}{\textwidth}
\includegraphics[width=\textwidth]{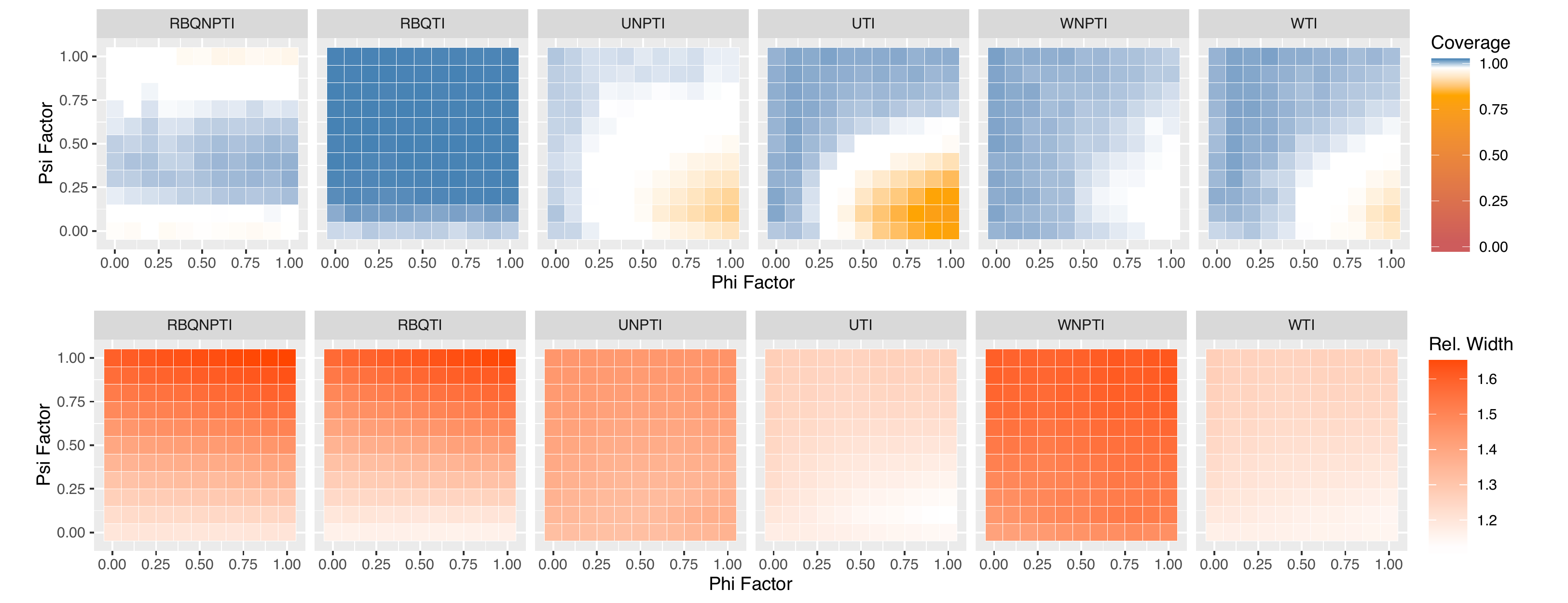}
\caption{\label{fig:1_T3}{\bf {\em t}-distributed} errors (3 degrees of freedom) ~~~~~~~~~~~\\ ~\\}
\end{subfigure}
\begin{subfigure}{\textwidth}
\includegraphics[width=\textwidth]{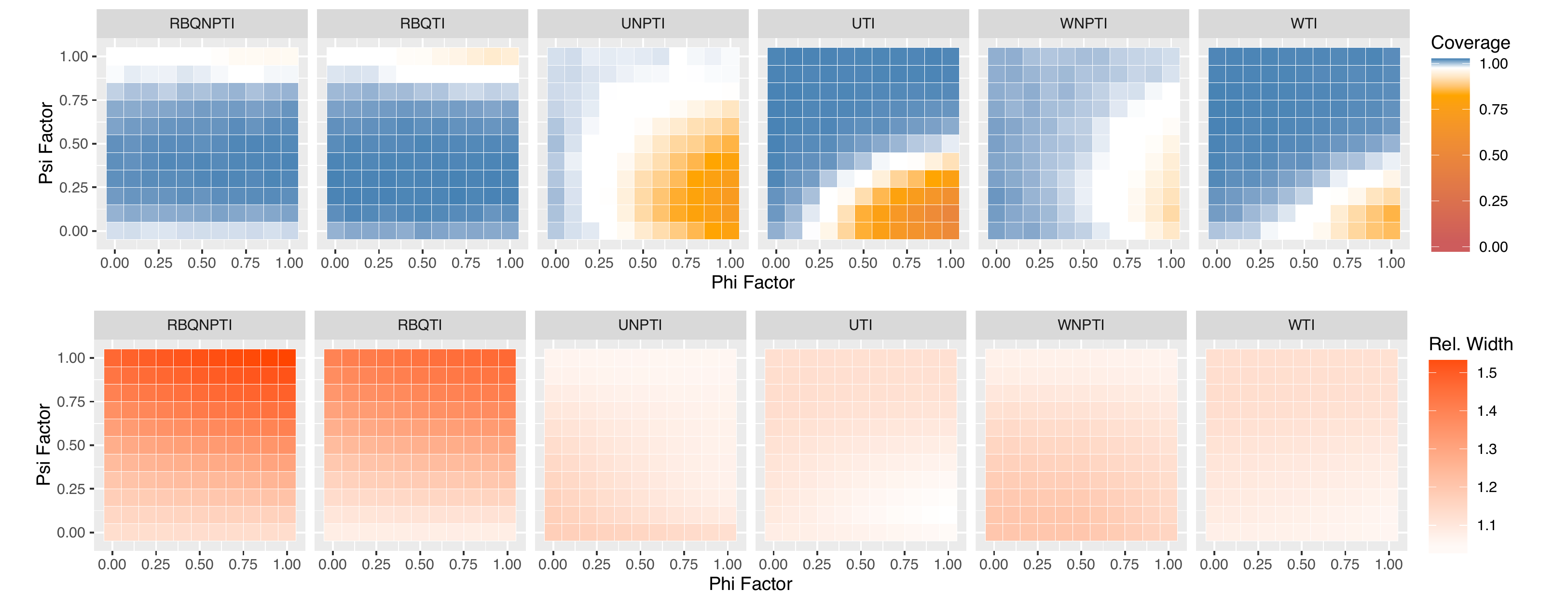}
\caption{\label{fig:1_UNIF}{\bf uniform} errors ~~~~~~~~~~~\\ ~\\}
\end{subfigure}
\caption{\label{fig:1_ALL}Coverage and average Relative Width for all methods and $\sigma^2_\varepsilon ={1}$. Phi Factor ($\phif$ in the text) controls the effect of covariates on the exploration DTR, and Psi Factor ($\psif$ in the text) controls the effect of $A_2$ on the conditional mean of $Y$.}
\end{figure*}
\clearpage

\begin{figure*}
\begin{subfigure}{\textwidth}
\includegraphics[width=\textwidth]{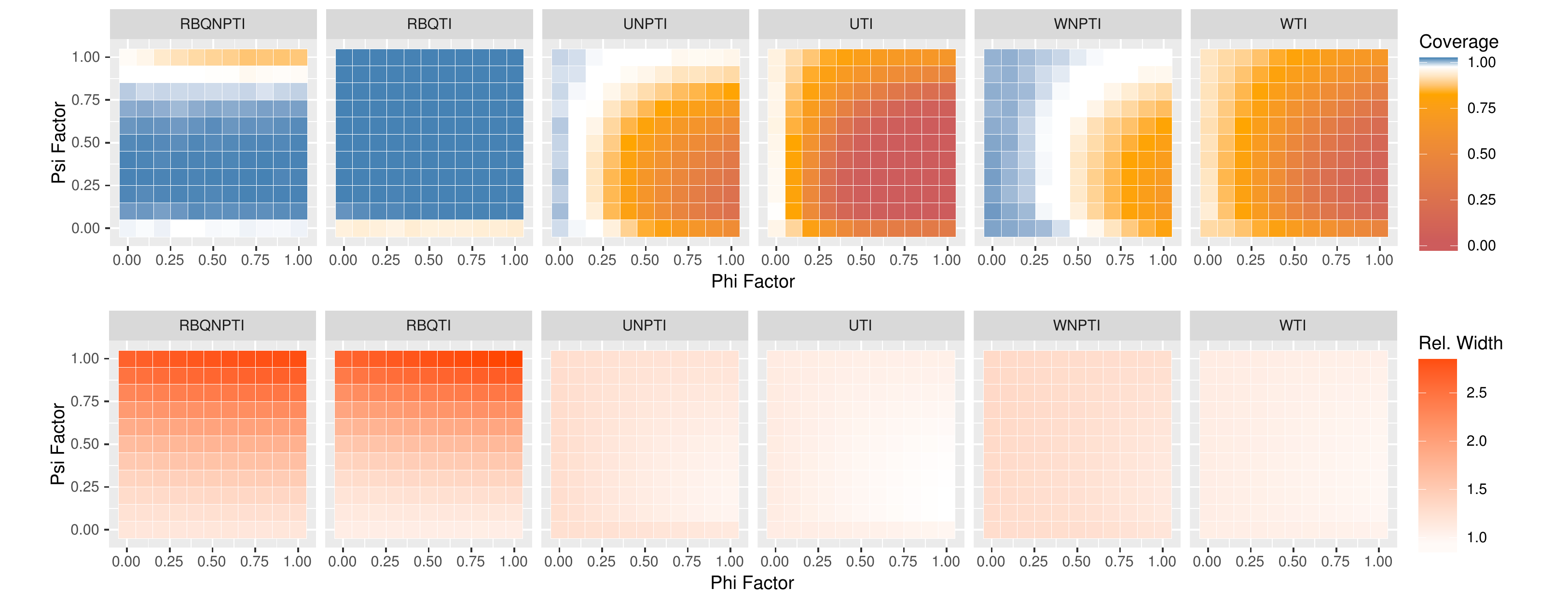}
\caption{\label{fig:.1_NORM}{\bf normal} errors ~~~~~~~~~~~\\ ~\\}
\end{subfigure}
\begin{subfigure}{\textwidth}
\includegraphics[width=\textwidth]{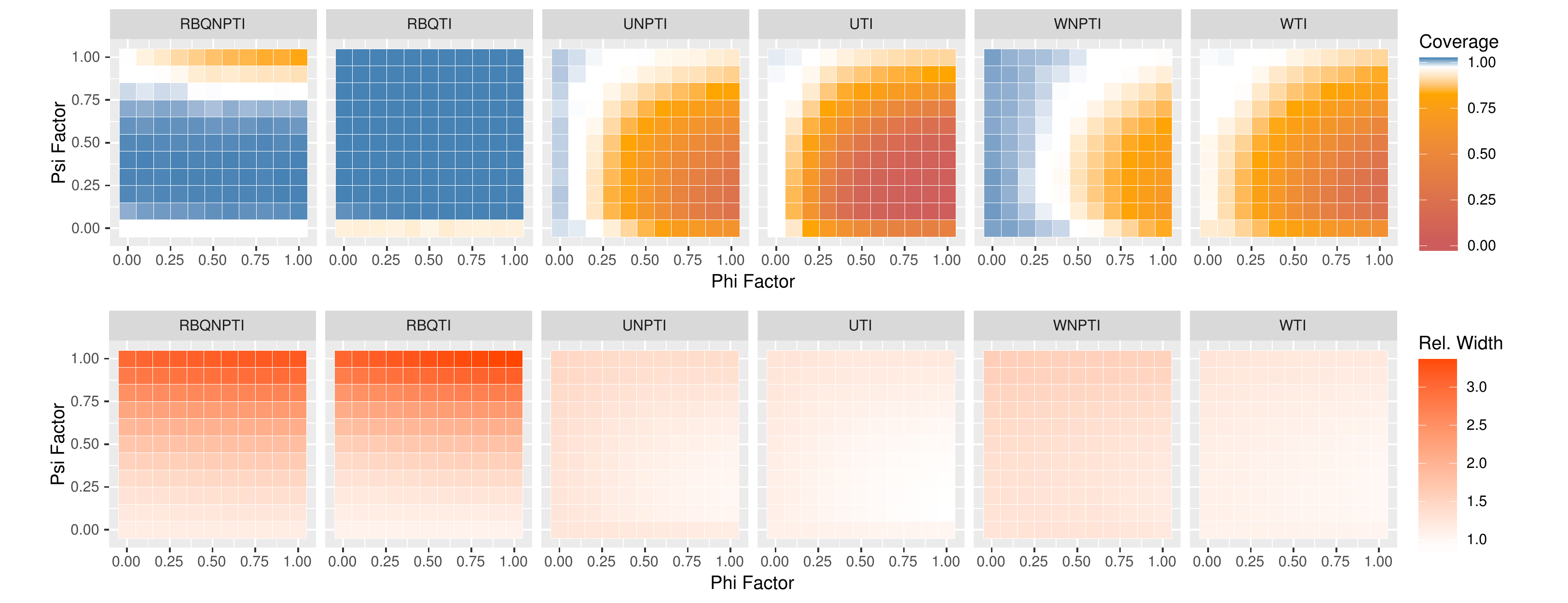}
\caption{\label{fig:.1_T3}{\bf {\em t}-distributed} errors (3 degrees of freedom) ~~~~~~~~~~~\\ ~\\}
\end{subfigure}
\begin{subfigure}{\textwidth}
\includegraphics[width=\textwidth]{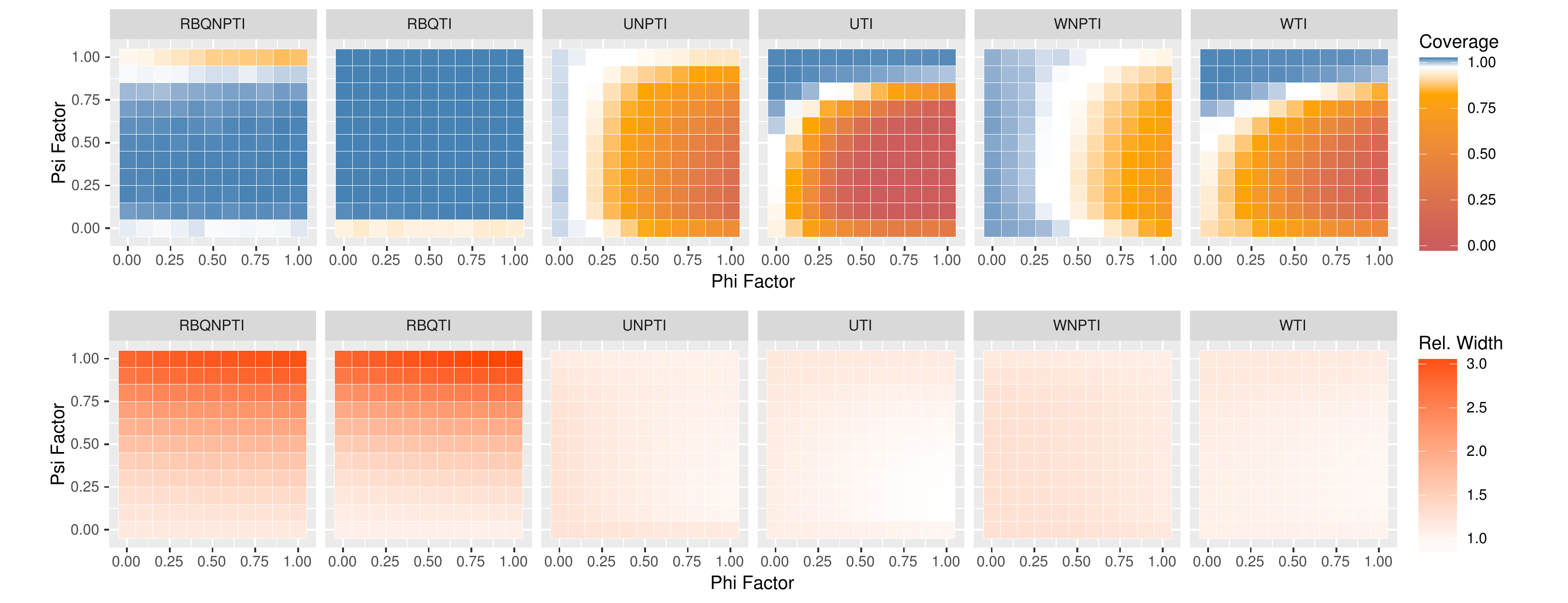}
\caption{\label{fig:.1_UNIF}{\bf uniform} errors ~~~~~~~~~~~\\ ~\\}
\end{subfigure}
\caption{\label{fig:.1_ALL}Coverage and average Relative Width for all methods and $\sigma^2_\varepsilon ={.1}$. Phi Factor ($\phif$ in the text) controls the effect of covariates on the exploration DTR, and Psi Factor ($\psif$ in the text) controls the effect of $A_2$ on the conditional mean of $Y$.}
\end{figure*}

\end{document}